\newtheorem{theorem}{Theorem}
\newtheorem{lemma}[theorem]{Lemma}
\newtheorem{proposition}[theorem]{Proposition}
\newtheorem{definition}{Definition}
\newtheorem{property}[definition]{Property}
\definecolor{Code}{rgb}{0,0,0}
\definecolor{Decorators}{rgb}{0.5,0.5,0.5}
\definecolor{Numbers}{rgb}{0.5,0,0}
\definecolor{MatchingBrackets}{rgb}{0.25,0.5,0.5}
\definecolor{Keywords}{rgb}{0,0,1}
\definecolor{self}{rgb}{0,0,0}
\definecolor{Strings}{rgb}{0,0.63,0}
\definecolor{Comments}{rgb}{0,0.63,1}
\definecolor{Backquotes}{rgb}{0,0,0}
\definecolor{Classname}{rgb}{0,0,0}
\definecolor{FunctionName}{rgb}{0,0,0}
\definecolor{Operators}{rgb}{0,0,0}
\definecolor{Background}{rgb}{0.98,0.98,0.98}
\lstdefinelanguage{Python}{
numbers=left,
numberstyle=\footnotesize,
numbersep=1em,
xleftmargin=1em,
framextopmargin=2em,
framexbottommargin=2em,
showspaces=false,
showtabs=false,
showstringspaces=false,
frame=l,
tabsize=4,
basicstyle=\ttfamily\small\setstretch{1},
backgroundcolor=\color{Background},
commentstyle=\color{Comments}\slshape,
stringstyle=\color{Strings},
morecomment=[s][\color{Strings}]{"""}{"""},
morecomment=[s][\color{Strings}]{'''}{'''},
morekeywords={import,from,class,def,for,while,if,is,in,elif,else,not,and,or,print,break,continue,return,True,False,None,access,as,,del,except,exec,finally,global,import,lambda,pass,print,raise,try,assert},
keywordstyle={\color{Keywords}\bfseries},
morekeywords={[2]@invariant,pylab,numpy,np,scipy},
keywordstyle={[2]\color{Decorators}\slshape},
emph={self},
emphstyle={\color{self}\slshape},
}
 \definecolor{BLACK}{gray}{0}
 \definecolor{WHITE}{gray}{1}
 \definecolor{RED}{rgb}{1,0,0}
 \definecolor{GREEN}{rgb}{0,1,0}
 \definecolor{BLUE}{rgb}{0,0,1}
 \definecolor{CYAN}{cmyk}{1,0,0,0}
 \definecolor{MAGENTA}{cmyk}{0,1,0,0}
 \definecolor{YELLOW}{cmyk}{0,0,1,0}
\begin{document}

\title{Using quantum computers to identify prime numbers via entanglement dynamics}

\author{Victor F. dos Santos\href{https://orcid.org/0009-0009-0319-4852}{\includegraphics[scale=0.05]{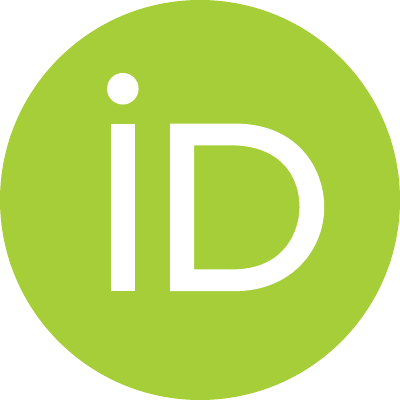}}}
\email[]{victorfds997@gmail.com}
\affiliation{Physics Department, Center for Natural and Exact Sciences, Federal University of Santa Maria, Santa Maria, RS, 97105-900, Brazil}

\author{Jonas Maziero\href{https://orcid.org/0000-0002-2872-986X}{\includegraphics[scale=0.05]{orcidid.pdf}}}
\email[]{jonas.maziero@ufsm.br}
\affiliation{Physics Department, Center for Natural and Exact Sciences, Federal University of Santa Maria, Santa Maria, RS, 97105-900, Brazil}

\begin{abstract}
Recently, the entanglement dynamics of two harmonic oscillators initially prepared in a separable-coherent state was demonstrated to offer a pathway for prime number identification. This article presents a generalized approach and outlines a deterministic algorithm making possible the implementation of this theoretical concept on scalable fault-tolerant qubit-based quantum computers. We prove that the diagonal unitary operations employed in our algorithm exhibit a polynomial-time complexity of degree two, contrasting with the previously reported exponential complexity of general diagonal unitaries.
\end{abstract}

\keywords{Prime numbers; Quantum entanglement; Entanglement dynamics; Quantum Computation}

\maketitle

\section{Introduction}

The quest to reliably and efficiently identify prime numbers (PNs) remains a topic of great interest in number theory \cite{Bressoud1989,crandall,Baillie2021,Granville2005}, particularly due to its intriguing connection with the non-trivial zeros of Riemann's zeta function \cite{schmayer, wolf,Feiler2013,Sierra2008}. Over the centuries, numerous classical algorithms have been devised for identifying primes, each offering its own set of advantages and limitations \cite{Bressoud1989,crandall, aaronson}. Among these, the AKS primality test stands out as the first deterministic algorithm to exhibit polynomial-time complexity for verifying the primality of individual integers, albeit with a polynomial degree that renders it less efficient for larger numbers \cite{AKS}. Conversely, the Sieve of Eratosthenes
offers a simpler approach, focusing on identifying all PNs within a specified range \( N \). Its time complexity, \( \mathcal{O}(N \log \log N) \), renders it particularly efficient for this purpose
\cite{crandall}.

While classical algorithms for PNs identification \cite{Sieve2017, Miller1976, Adleman1983, AKS} have undergone significant development, their adaptation to the realm of quantum computers (QCs) remains relatively limited
\cite{Donis-Vela2017, Chau1997, Li2012}. However, the intersection of such questions with experimental physics \cite{Garcia-Martin2020, Mussardo2020,alexandre} presents a promising avenue for the development of more intuitive quantum algorithms. A notable recent study \cite{alexandre} proposed an innovative approach to primality testing using quantum optics. In their work, researchers devised an experiment involving the entanglement of two quantum harmonic oscillators initially prepared in coherent states, followed by the measurement of the reduced linear entropy of one of them. 
They theorized that information regarding PNs could be extracted from the Fourier modes of the reduced linear entropy: PNs were expected to adhere to a lower bound curve, while composite numbers would consistently surpass this bound. Although the experimental implementation has yet to be realized, and has known scalability limitations, their theoretical groundwork has laid the foundation for us to generalize their approach and to develop a deterministic algorithm tailored for implementation on qubit-based QCs.

In this article, we build upon the theoretical framework proposed in Ref. \cite{alexandre}, aiming to adapt it for implementation on qubit-based QCs by removing certain restrictions imposed on the Hamiltonian and initial states. As a result, we demonstrate, as detailed in the Appendices, that the class of diagonal unitary gates utilized in our approach can be implemented in polynomial time, contrary to the expectations set forth in Refs. \cite{Bullock2004,welch}. Our algorithm is designed to determine all PNs within a given range \( N \) through the manipulation of a bipartite system \( AB \) and the measurement of the linear entropy of entanglement \cite{Bennett1996,Vidal1999,Basso2022,Scherer2021} of subsystem \( A \) over a period \( T \).

Our methodology unfolds with the following steps. Firstly, we modify the definitions to align with the peculiarities of qubit-based QCs. Secondly, we select a suitable initial state that can be efficiently prepared. Thirdly, we efficiently prepare an evolved state using the techniques outlined in Ref. \cite{welch}, which surprisingly results in 
exponential gate cost reduction
in comparison to the general case. Subsequently, we measure the reduced purity, a task that can be executed efficiently \cite{ekert}. Following this, we calculate the Fourier modes of the reduced purity function via numerical integration methods \cite{Press2007}.

Given a dataset encompassing all points within the range \( N \), our algorithm enables the deterministic identification of Fourier modes corresponding to PNs, allowing for the distinction between primes and composites. We quantify the number of gates utilized at each step, with a specific focus on \( Z \)-rotations, Controlled-NOT and Hadamard gates. Additionally, we discuss simulations conducted using Qiskit \cite{Qiskit} and explore potential enhancements to our algorithm for more efficient implementation on real qubit-based QCs.

We begin by establishing key definitions.
Let \( A \) and \( B \) represent the respective subsystems, each characterized by a time-independent Hamiltonian \( \hat{H}_A \) and \( \hat{H}_B \), where \( \hat{H}_A = \hat{H}_B \). We define a bipartite Hamiltonian \( \hat{H}_{AB} = \lambda \hat{H}_A \otimes \hat{H}_B \), with \( \lambda \in \mathbb{R} \) denoting the coupling constant. The corresponding time-evolution operator is given by \( \hat{U}(t) = e^{-i\hat{H}_{AB}t/\hbar} \) \cite{Nielsen2000}.

To obtain a distinction between prime and composite numbers, we employ the initial state 
\( |\psi(0)\rangle_{AB}=|\phi\rangle_A\otimes|\phi\rangle_B \). A suitable choice for these individual states is 
\begin{equation}
\label{phi_state}
|\phi\rangle_S = \sum_{n_S = 1}^d c_{n_S}|E_{n_S}\rangle,
\end{equation}
where \( S = A,B \) represents the subsystem index, \( d \) is the dimension of each subsystem, \( c_{n_S} \neq 0 \) are the initial state coefficients, and \( \{E_{n_S}\}_{n_S = 1}^d \) and \( \{|E_{n_S}\rangle\}_{n_S = 1}^d \) denote the eigenvalues and eigenvectors of each subsystem Hamiltonian, respectively. The evolved state at time \( t \) is 
\( |\psi(t)\rangle_{AB} = \hat{U}(t)|\psi(0)\rangle_{AB} = \sum_{n_A, n_B = 1}^d c_{n_A}c_{n_B}e^{-i\lambda E_{n_A}E_{n_B} t/\hbar}|E_{n_A}E_{n_B}\rangle \).

Our main condition requires that the energy levels of both individual Hamiltonians are equidistant, i.e., 
\( E_{n_S} =  n_S\mu \) for some constant \( \mu \in \mathbb{R} \). Defining \( \omega = \lambda \mu^2/\hbar \), we find that
\begin{equation}
\label{unitary}
\hat{U}(t) = \sum_{n_A, n_B = 1}^d e^{-i\omega n_An_Bt}|E_{n_A}E_{n_B}\rangle\langle E_{n_A}E_{n_B}|,
\end{equation}
and
\begin{equation}
\label{psi_t}
|\psi(t)\rangle_{AB} = \sum_{n_A, n_B = 1}^d c_{n_A}c_{n_B}e^{-i\omega n_{A}n_{B} t}|E_{n_A}E_{n_B}\rangle.
\end{equation} 

A key result of our research is the demonstration of high gate-efficiency for implementing the diagonal unitary gate specified in Eq. (\ref{unitary}), as detailed in the Appendix \ref{sec:appendixD}. We show that the gate cost for constructing the \( q \)-qubit unitary gate \( \hat{U}(t) \) using this method is a polynomial function \( G_2(q) = \frac{3}{4}q^2 + q \). This result not only facilitates PNs identification but also paves the way for efficient implementation of similar unitary gates in future qubit-based QCs research. 

The remainder of this article is organized as follows. In Sec. \ref{sec:purity}, we give the general expression for the reduced purity of a subsystem \( A\) and highlight its mathematical properties. Next, in Sec. \ref{sec:fourier}, we establish the theoretical connection between the Fourier modes of the reduced purity and the distribution of prime numbers. Then, in Sec. \ref{sec:algorithm}, we report our quantum algorithm, specifying the techniques used and the associated computational costs. In Sec. \ref{sec:simulations}, we present the results of simulations made using Qiskit. Finally, we conclude in Sec. \ref{sec:conclusions} by revisiting the key points of our method and quantum algorithm, while discussing limitations and further potential improvements with respect to an implementation on quantum hardware.
Additional details supporting our findings are provided in the appendices. Appendix \ref{sec:appendixA} is a summary of the technique developed in Ref. \cite{welch} for the implementation of general diagonal unitary gates using Walsh functions. Appendix \ref{sec:appendixB} presents a proof for a known identity that relates tensor products of Pauli $\hat{Z}$ gates and $\widehat{CNOTs}$, and a proof for how this identity relates to the implementation of exponentials of Walsh operators. In Appendix \ref{sec:appendixC}, we prove some results regarding Walsh matrices and delineate our notation for them, as it will be heavily used in further demonstrations. Then, Appendix \ref{sec:appendixD} uses results from the previous appendices to rigorously demonstrate that the diagonal unitary gate in Eq. (\ref{unitary}) may be implemented efficiently using only a polynomial number (with respect to the number of qubits) of elementary gates. Furthermore, Appendix \ref{sec:appendixE}
is a direct proof for a modified version of the SWAP test, aiming for the estimation of the reduced purity.

\section{Reduced purity}
\label{sec:purity}

Without loss of generality, we designate subsystem \( A \) for computing the reduced purity \( \gamma_A(t) \). Let us begin by revisiting the definition of the reduced density operator \( \hat{\rho}_{A}(t) \) for a system \( AB \) with density operator \( \hat{\rho}_{AB}(t) \), given as \( \hat{\rho}_A(t) = \operatorname{Tr}_{B}(\hat{\rho}_{AB}(t)) \), where \( \operatorname{Tr}_B(.) \) denotes the partial trace function \cite{Maziero2017} over subsystem \( B \). The reduced purity function, \( \gamma_A(t) = \operatorname{Tr}(\hat{\rho}^2_A(t)) \), can then be computed from \( \hat{\rho}_A(t) \). This quantity is related to the linear entanglement entropy by $E_l(|\psi(t)\rangle_{AB})=1-\gamma_A(t).$
Given that \( \hat{\rho}_{AB}(t) = |\psi(t)\rangle_{AB}\langle\psi(t)|\), the reduced purity can be straightforwardly expressed as
\begin{equation}
\label{purity}
    \gamma_{A}(t) = \sum_{j, k, l, m = 1}^d |c_{j}|^2|c_{k}|^2|c_{l}|^2|c_{m}|^2e^{-i \omega t(j - k)(l - m) }.
\end{equation}

It is noteworthy to highlight several properties of the function \( \gamma_A(t) \) defined in Eq. (\ref{purity}). Firstly, it exhibits time periodicity with period \( T = 2\pi/\omega \), a characteristic stemming directly from the time evolution of our system. 
Secondly, a notable observation arises from the structure of the sum in Eq. (\ref{purity}): the indices \( j, k, l, \) and \( m \) all take the same values. As a consequence, the imaginary parts of the phases \( e^{-i \omega t(j - k)(l - m) } \) for a fixed \( t \) mutually cancel each other. This cancellation is crucial, ensuring that \( \gamma_{A}(t) \) remains a real-valued function. 
This function is symmetric about half the period, \( \gamma_{A}(T/2 + h) = \gamma_{A}(T/2 - h) \), which enables us to halve the number of times we need to execute the quantum circuit to obtain it.


\section{Mapping prime numbers with Fourier modes} 
\label{sec:fourier}

The reduced purity function given by Eq. (\ref{purity}) can be expressed as a finite sum of cosines, where the maximum number of Fourier modes \( \alpha_n \) is \((d-1)^2\). Therefore, employing a Fourier expansion in this scenario yields:
\begin{equation}
\gamma_{A}(t) = \alpha_0 + \sum_{n = 1}^{(d-1)^2}\alpha_n\cos(n \omega t),
\end{equation}
where \( \alpha_0 \) represents the average value and \( \alpha_n \) are the Fourier modes \cite{Arfken2013}. 

To compute the Fourier modes \( \alpha_n \), we utilize the expression:
\begin{equation}
\label{fourier_modes}
    \alpha_n = 4\sum_{k, m = 1}^{d-1}\sum_{j > k}^d\sum_{l > m}^d|c_{j}|^2|c_{k}|^2|c_{l}|^2|c_{m}|^2\delta^{n}_{(j-k)(l-m)},
\end{equation}
where \( \delta^{n}_{(j-k)(l-m)} \) represents the Kronecker delta function ensuring the resonance condition for the Fourier modes.
This formulation allows us to decompose the reduced purity \( \gamma_{A}(t) \) into its constituent Fourier components, facilitating the identification of PNs based on their distinct Fourier signatures.

For prime \( n \), the trivial decomposition is \( (j-k) = n \) and \( (l-m) = 1 \), and vice versa. This results in a unique decomposition that corresponds to the expected behavior for prime numbers. 
However, if \( n \) is composite, it possesses non-trivial decompositions as well. To examine the impact of these decompositions on the Fourier modes expressed in Eq. (\ref{fourier_modes}), let us define the lower bound \( B_n \) as the value obtained from Eq. (\ref{fourier_modes}) using the trivial decomposition of \( n \geq 2\). Hence, we have:
\begin{equation}
\label{bound}
    B_n = 8\sum_{k = 1}^{d - n}\sum_{m = 1}^{d - 1}|c_{k}|^2|c_{m}|^2|c_{k + n}|^2|c_{m + 1}|^2.
\end{equation}
This lower bound \( B_n \) provides insight into the minimum value that the Fourier coefficient \( \alpha_n \) can attain for a given composite \( n \). Understanding this bound is crucial for discerning the distinct Fourier signatures associated with prime and composite numbers.
For \( 2 \leq n \leq d - 1 \), we have \( B_n > 0 \) as per Eq. (\ref{bound}). However, when \( d \leq n \leq (d-1)^2 \), the domain of \( B_n \) can be extended such that \( B_n = 0 \).

Now, let \( \{y^{(n)}_r\}_{r = 1}^{z} \) represent the sequence of \( z \) distinct divisors of \( n \geq 2 \) in increasing order of magnitude. Excluding the trivial cases \( y^{(n)}_1 = 1 \) and \( y^{(n)}_z = n \), we find that in general:
\begin{equation}
\label{alpha_n_full}
    \alpha_n = B_n + 4\sum_{r = 2}^{z - 1}\sum_{k = 1}^{d - \frac{n}{y^{(n)}_r}}\sum_{m = 1}^{d - y^{(n)}_r}|c_{k}|^2|c_{m}|^2|c_{k + \frac{n}{y^{(n)}_r}}|^2|c_{m + y^{(n)}_r}|^2.
\end{equation}
This expression for \( \alpha_n \) encompasses both the contribution from the trivial decomposition and the contributions from the non-trivial divisors of \( n \), enabling a comprehensive assessment of the Fourier modes associated with composite numbers.

In the domain \( 2 \leq n \leq 2(d-1) \), we can confidently assert that \( \alpha_n > B_n \) holds true. However, beyond this range, specifically in the interval \( 2(d-1) < n \leq (d-1)^2 \), certain composite numbers \( n = n_0 \) may exhibit \( \alpha_{n_0} = 0 \). This phenomenon arises because the first semi-primes (numbers that are the product of two prime numbers) are multiples of \( 2 \). Consequently, for \( n_0 = 2v \), where \( v > d - 1 \) is a prime, there exist no values for the indices \( k \) and \( m \) in Eq. (\ref{alpha_n_full}) that fall within their defined ranges in the summation. However, in this interval it is possible to discard any integer as a prime candidate if it has a non-zero Fourier mode. Since prime numbers always yield $\alpha_n = 0$ in this interval, we can safely guarantee that if $\alpha_n \neq 0$, then $n$ is composite. The inverse, however, is not always true: some composite numbers have $\alpha_n = 0$.



Here is the summary of the expected values of \( \alpha_n \) in the three regimes:

- Regime \textbf{I}: \(2\leq n \leq d - 1\). For prime numbers in this range, it holds true that \( \alpha_n = B_n > 0 \); otherwise, \( \alpha_n > B_n \). 

- Regime \textbf{II}: \(d \leq n \leq 2(d-1)\). Prime numbers in this interval exhibit \( \alpha_n = B_n = 0 \), while composite numbers consistently demonstrate \( \alpha_n > 0 \).

- Regime \textbf{III}: \(2(d-1) < n \leq (d-1)^2\). Prime numbers within this regime always yield \( \alpha_n = 0 \). However, some composite numbers may also yield \( \alpha_n = 0 \) in this interval. Consequently, this regime cannot provide conclusive evidence regarding the primality of \( n \). Nonetheless, any integer $n$ with $\alpha_n \neq 0$ in this regime is guaranteed to be a composite number.

This summary provides a clear delineation of the behavior of \( \alpha_n \) across different regimes, aiding in the identification of prime numbers based on their Fourier modes.

Our regime of interest is \( \mathcal{D} = \textbf{I} \cup \textbf{II} \). In \( \mathcal{D} \), it is consistently true that:
\begin{equation}
\label{prime_inequality}
\alpha_n \geq B_n,
\end{equation}
with equality achieved if and only if \( n \) is a prime number. This inequality forms the cornerstone of the algorithm and serves as the basis for objectively distinguishing prime numbers from composites.

While the protocol enables the computation of \( \alpha_n \), without knowledge of \( B_n \) in Regime \textbf{I}, it is impossible to discern whether \( \alpha_n = B_n \) or \( \alpha_n \neq B_n \). 
A straightforward solution involves obtaining the analytical value of the lower bound \( B_n \) within that regime, achievable by selecting a simple initial state and utilizing Eq. (\ref{bound}) subsequently.
In our algorithm, for simplicity, we opt for an initial state of maximum superposition.

\section{The Quantum algorithm}
\label{sec:algorithm}

Below, we provide a structured description of all the steps necessary to develop our protocol. We also present here the number of gates necessary for each step.

\textit{1. Qubit Codification}: To adapt our protocol to a qubit-based quantum computing algorithm, we need to adjust some of our definitions regarding the translation of qudits to qubits. Given that the bipartite system \(AB\) has \(d^2\) energy levels and we aim to utilize \(q\) qubits instead of two qudits, the condition is imposed that:
\begin{equation}
\label{codification}
        d^2 = 2^q.
\end{equation}
Equation (\ref{codification}) inherently assumes that \(d\) is a power of \(2\). If $d$ is not a power of $2$, we have to find $q$ such that \(q = 2\lceil \log_2 (d)\rceil\), where \(\lceil . \rceil\) denotes the ceiling function. We conveniently assign the first half of qubits to represent subsystem \(A\) and the remaining half to represent subsystem \(B\).

\textit{2. Initial State Flexibility}: The initial state \( |\psi(0)\rangle_{AB} \) is defined as the product state \( |\psi(0)\rangle_{AB} = |\phi\rangle_A \otimes |\phi\rangle_B \), where the coefficients \( c_{n_S} \) of the subsystem states \( |\phi\rangle_S = \sum_{n_S = 1}^d c_{n_S} |E_{n_S}\rangle \) must satisfy \( c_{n_S} \neq 0 \). Leveraging this degree of freedom, we opt for convenience by employing an initial state that achieves maximum superposition, expressed as \( |\psi(0)\rangle_{AB} = \frac{1}{d}\sum_{n_A, n_B = 1}^d |E_{n_A}E_{n_B}\rangle \). Here, we implicitly define the eigenbasis \( \{E_{n_S}\}_{n_S = 1}^d \) as the computational basis for each set of \( q/2 \) qubits. To produce this initial state, we apply a series of Hadamard gates \( \hat{H} \) \cite{Nielsen2000} to all \( q \) qubits:
\begin{equation}
\label{psi_0}
|\psi(0)\rangle_{AB} = \hat{H}^{\otimes q}|000...0\rangle.
\end{equation}
It is evident that the number of gates required here to generate this initial state is simply:
\begin{equation}
\label{G1}
G_1(q) = q.
\end{equation}

\textit{3. Evolved State Preparation}: The detailed results regarding this item are provided in Appendix \ref{sec:appendixA}. To obtain the evolved state \( |\psi(t)\rangle_{AB} \) of Eq. (\ref{psi_t}), we employ the method outlined in Ref. \cite{welch} to construct \( \hat{U}(t) \) efficiently. Initially, we have to determine, in principle, all the \( 2^q - 1 \) Walsh angles \( a_j(t) \) \cite{Walsh1923, Fine1949, Zhihua1983, Yuen1975}. However, according to the results shown in Appendix \ref{sec:appendixD}, only $\frac{1}{4}q^2 + q$ of them are non-null. By definition, Walsh angles are expressed as
\begin{equation} 
\label{walsh_angles}
a_j(t) = \frac{1}{2^q}\sum_{k = 0}^{2^q - 1}f_k(t)w_{jk},
\end{equation}
where \( w_{jk} \) denotes the Paley-ordered discrete Walsh functions and \( f_k(t) \) are the eigenvalues of the operator $\hat{f}(t)$, extracted from $\hat{U}(t) = e^{i\hat{f}(t)}$. 

Together with the Walsh angles \( a_j(t) \), the unitary gate \( \hat{U}(t) \) is obtained using the formalism of Walsh operators \( \hat{w}_j \). The expression for $\hat{U}(t)$ is then given by
\begin{equation}
\label{U_walsh}
    \hat{U}(t) = \prod_{j = 1}^{2^q - 1}e^{ia_j(t)\hat{w}_j}.
\end{equation}
To produce the exponential operators \(e^{ia_j(t) \hat{w_j}} \), we use the identity presented in Appendix \ref{sec:appendixB} and consider the binary representation \( (j_{q}...j_{2}j_{1}) \) of the integer \( j \), with the most significant non-zero bit (MSB) on the left. This enables us to represent the exponential operator as a single Z-rotation, $\hat{R}_z(\theta_j(t))$, applied to qubit $q_{m_{h_j}}$, flanked by two identical controlled-NOT gates, with qubit $q_{m_i}$ serving as the control and qubit $q_{m_{h_j}}$ as the target. Here, the index $m_{h_j} \geq 1$ signifies the position of the MSB of $j$, and the indices $m_i$ are defined by the condition $j_{m_i} = 1$. These $\hat{R}_z (\theta_j(t))$ rotations have angles \( \theta_j(t) = -2a_j(t) \).  

Therefore, preparing $|\psi(t)\rangle_{AB} = \hat{U}(t)|\psi(0)\rangle_{AB}$ demands a number of gates given by
\begin{equation}
\label{G2}
    G_2(q) = \frac{3}{4}q^2 + q. 
\end{equation}

\textit{4. Reduced Purity Estimation}: This step involves efficiently obtaining the reduced purity \( \gamma_A(t) \) of Eq. (\ref{purity}) by utilizing techniques from Ref. \cite{ekert}. The quantum circuit employed here resembles the SWAP test circuit \cite{buhrman, barenco} and employs an ancilla qubit \( q_0 \) and two copies of \( q \) qubits prepared in the same pure state. The operations sequence for this quantum circuit is as follows: a Hadamard gate on \( q_0 \), qubit-qubit controlled-SWAP gates between the first $q/2$ qubits of each copy, with $q_0$ as the control qubit, another Hadamard gate on \( q_0 \) and a measurement of $q_0$ in the computational basis. After repeatedly executing the circuit, we estimate the probability $P_0$ of obtaining the state $|0\rangle$ for $q_0$. Then, as detailed in Appendix \ref{sec:appendixE}, the reduced purity over time can be estimated using the expression \( \gamma_A(t) = 2P_0(t) - 1 \).

This step involves a total number of gates given by
\begin{equation}
\label{G3}
    G_3(q) = \frac{3}{2}q + 2.
\end{equation}

\textit{5. Fourier Modes Calculation}: In Regime \textbf{I}, we obtain the lower bound \( B_n \) of Eq. (\ref{bound}) using the initial state of Eq. (\ref{psi_0}). In this case, \( c_j = \frac{1}{\sqrt{d}} \) for any \( j \), and the corresponding lower bound \( B_n \) interpolation in this range of \( n \) is a straight line with a negative slope. In Regime \textbf{II}, the lower bound is \( B_n = 0 \). The expression for \( B_n \) in the regime of interest \( \mathcal{D} = \) \textbf{I} \( \cup \) \textbf{II} can then be written as 
\[
B_n = 
\begin{cases} 
\frac{-8(d-1)}{d^4}n + \frac{8d - 8}{d^3} & \text{if } n \in \textbf{I}, \\
0 & \text{if } n \in \textbf{II}.
\end{cases}
\]
Considering the remarks made in the previous section, we know that in a graph of Fourier modes, every prime number must have a corresponding \( \alpha_n \) position belonging exactly to the interpolated curve of \( B_n \). Any composite number in the regime of interest \( \mathcal{D} \) has \( \alpha_n > B_n \) and thus is necessarily above \( B_n \).

Now, using Fourier analysis, the Fourier modes \( \alpha_n \) are calculated by the integral
\begin{equation} \label{alpha_numerico}
\alpha_n = \frac{2\omega}{\pi}\int_{0}^{T/2}\gamma_A(t)\cos(n\omega t)dt.
\end{equation}
Normally, Eq. (\ref{alpha_numerico}) would be an integral over the whole period \( T \), but we are employing the property of the symmetry of \( \gamma_A(t) \), presented earlier in this article. After calculating the Fourier modes \( \alpha_n \), the last part of our algorithm involves comparing the value of \( \alpha_n \) with the analytical lower bound \( B_n \). In this final step, the numerical integration is done in \( p \) partitions, resulting in an equivalent number of points used for \( \gamma_A(t) \) in the interval \( 0 \leq t \leq T/2 \). Consequently, to achieve a desired precision \( \epsilon \), our quantum circuit requires at least \( p \) executions. Currently, the exact optimal scaling of \( p \) with respect to \( d \), for a given \( \epsilon \), remains undetermined.

\begin{figure}[t]
  \centering    \includegraphics[width=0.80\linewidth]{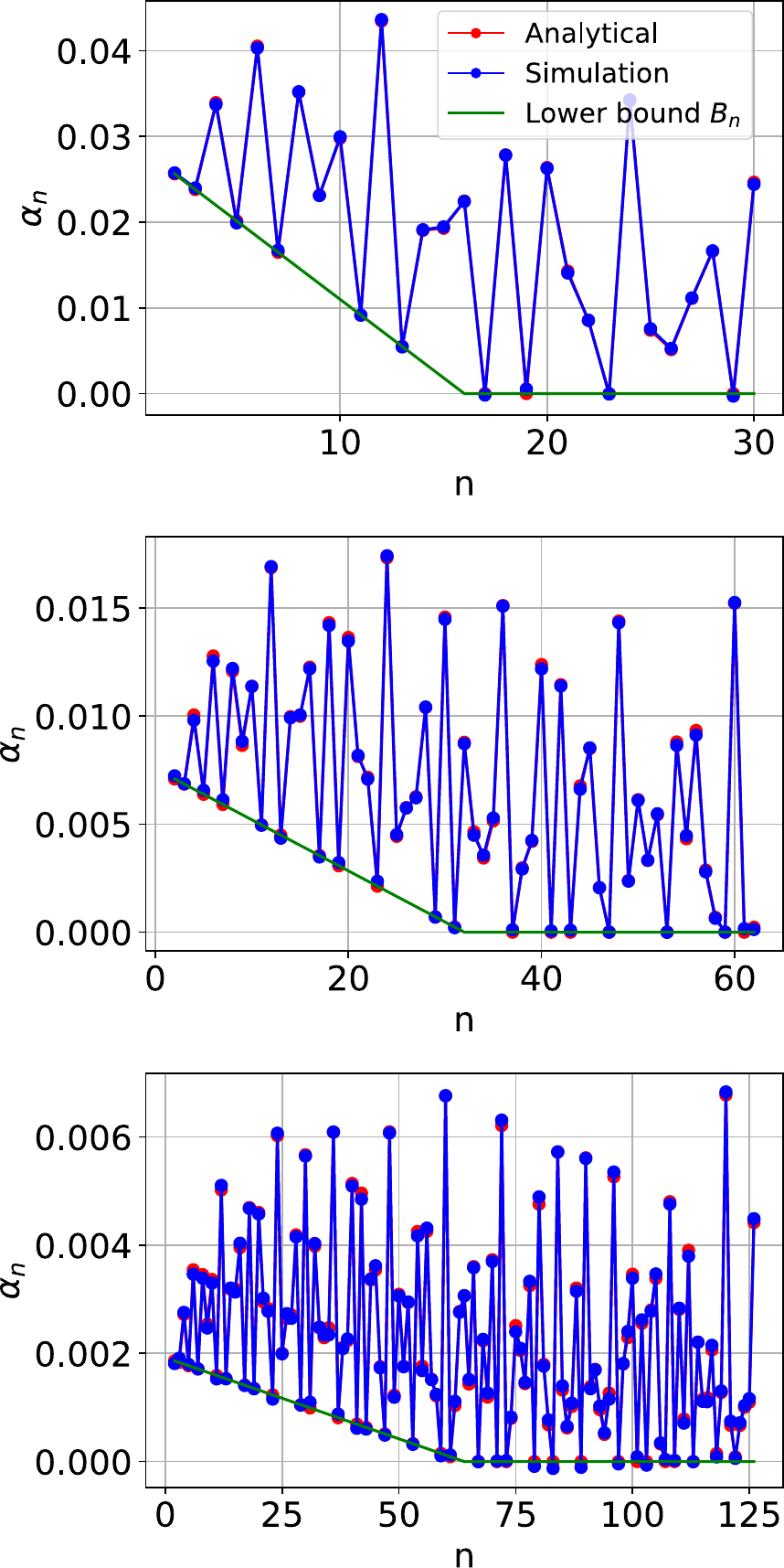}  \captionsetup{justification=raggedright,singlelinecheck=false}
\caption{Comparison of simulation results with theoretical predictions for the Fourier modes of the reduced purity across different dimensions $(d = 16, 32, 64)$. Red points represent the analytical values of $\alpha_n$, calculated directly from their theoretical expressions, while the green line stands for the minimum value for the Fourier modes, also derived from theoretical calculations. Blue points illustrate the Fourier modes obtained through numerical integration of the reduced purity $\gamma_A(t)$ extracted from the classical emulation of our quantum circuit. The numerical integration was performed using various partition values $(p = 375, 1500, 6000)$. Prime numbers are expected to align with the lower bound $B_n$, whereas composite numbers appear above.}
    \label{fig:fourier}
\end{figure}

\section{Simulations}
\label{sec:simulations}

In order to evaluate the applicability of our algorithm, classical simulations were performed using IBM's Qiskit framework (version 0.45.1). These simulations targeted three distinct values of \( d \), with results depicted in blue in Fig. \ref{fig:fourier}. For all the simulations, we used \( 10^5 \) shots and fixed \( \omega = 0.1 \, \text{s}^{-1} \), as changing the value of \( \omega \) has no effect on the Fourier modes \( \alpha_n \).
Regarding the number of executions \( p \) of the circuit, we selected \( p = 375 \), \( p = 1500 \), and \( p = 6000 \) for the dimensions \( d = 16 \), \( d = 32 \), and \( d = 64 \), respectively. The values of \( p \) were chosen to achieve roughly the same accuracy for the three values of \( d \). Using Python (version 3.11.3) with the Scipy library (version 1.11.3), Fourier modes \( \alpha_n \) were calculated with Simpson's rule for the numerical integration of Eq. (\ref{alpha_numerico}). Due to the substantial size of the quantum circuit for the three dimensions analyzed in our simulations, we present the circuit for a lower dimension, \( d=4 \), purely for illustrative purposes. This simplified example is shown in Fig. \ref{fig:qc}, allowing us to convey the structure without the complexity of the larger dimensions.

\begin{figure*}[t]
  \centering     \includegraphics[width=0.95\linewidth]{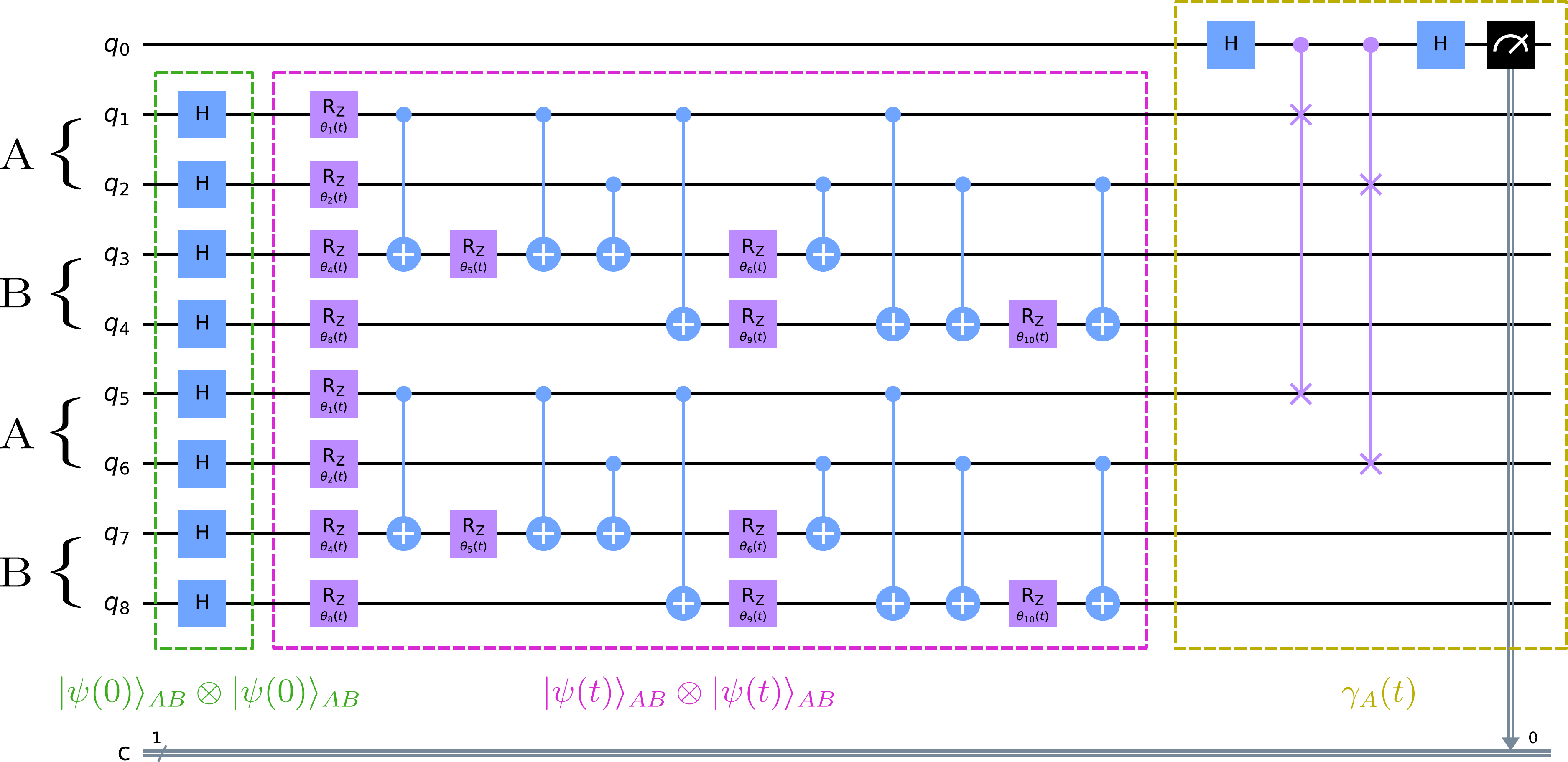}    \captionsetup{justification=raggedright,singlelinecheck=false}
    \caption{Schematic of the quantum circuit for our algorithm for \(d = 4\). Each colored box corresponds to a stage of the circuit and represents the set of operations in the respective step for the two copies of \(q\) qubits. In the first stage, we prepare the initial state of maximum superposition \(|\psi(0)\rangle_{AB}\) for both copies of \(q = 4\) qubits, starting from the state $|0\rangle$ for each qubit. The second stage is used for the efficient state preparation of \(|\psi(t)\rangle_{AB}\) for each copy, where the total number of gates, including both copies, is given by \(\frac{3}{2}q^2 + 2q\). In the last stage, with the aid of an ancilla qubit \(q_0\), we apply the gates corresponding to the variation of the SWAP test to extract the reduced purity \(\gamma_A(t)\) after executing the circuit several times.}
    \label{fig:qc}
\end{figure*}

\section{Conclusions} 
\label{sec:conclusions}

Concluding, this work presented a qubit-based quantum algorithm for prime number identification, rooted in the analysis of entangled subsystem dynamics. By employing a bipartite Hamiltonian and analyzing the Fourier modes of the reduced purity, we distinguish between prime and composite numbers within the range $2 \leq n \leq 2(d-1)$. Implementing this on a qubit-based system involves transforming a 2-qudit system into a qubit system, with the unitary gate of Eq. (\ref{unitary}) implementable in polynomial time, contrary to the expected exponential gate requirements.

Our quantum circuit executes in three stages with a total gate cost indicating quadratic scaling in the number of digits of \(N = 2(d-1)\). Despite idealized simulations, implementation on quantum hardware is feasible but faces challenges such as qubit connectivity. Alternatives like trapped ion quantum computers or modified gate preparation and reduced purity measurement methods could overcome these.

The efficient realization of unitary operations demonstrates the potential for broader application in quantum computing, suggesting future work could extend this algorithm to verify larger primes. This progress in quantum algorithm optimization could significantly impact the field's practical application to fundamental computational problems.

\begin{acknowledgments}
This work was supported by the Coordination for the
Improvement of Higher Education Personnel (CAPES),
Grant No. 23081.031640/2023-17, by the National Council for Scientific and Technological Development (CNPq), Grants No. 309862/2021-3, No. 409673/2022-6, and No. 421792/2022-1, and the National   Institute for the Science and Technology of Quantum Information (INCT-IQ), Grant No. 465469/2014-0. We thank Alexandre D. Ribeiro for valuable discussions on the subject of this article.
\end{acknowledgments}

\textbf{Data availability.} The data that support the findings of this study are available at \textcolor{blue}{https://github.com/santosvictorf/primes-identification-using-qcomputers/tree/main/qiskit}. This repository includes the Python code for implementing the quantum algorithm in Qiskit, the simulation results, and auxiliary codes that support the theoretical findings.

\appendix

\section{Diagonal unitary gate implementation using Walsh functions}
\label{sec:appendixA}

In this Appendix, we provide an overview of the algorithm introduced in Ref. \cite{welch} for implementing unitary operations on quantum computers. To begin, we establish some definitions. Let $q$ denote the number of qubits, and consider positive integers $j$ and $k$ with binary and dyadic representations given by
\begin{align}
    & \text{bin}(j) = (j_qj_{q-1}\cdots j_1), \\
    & \text{dyad}(k) = (k_1k_2\cdots k_q),
\end{align}
where the most significant bit (MSB) is on the left. Henceforth, we assume $j = 0, 1, \cdots, 2^q - 1$ and $k = 0, 1, \cdots, 2^q - 1$.

Next, we define the discrete Paley-ordered Walsh functions $w_{jk}$ as
\begin{equation}
\label{walsh_functions}
    w_{jk} = (-1)^{\sum_{i = 1}^q j_ik_i}.
\end{equation}
Let us discretize the interval $0 \leq x < 1$ into points given by
\begin{equation}
    x_k = \frac{k}{2^q}.
\end{equation}
Since the Walsh functions form an orthonormal basis, we can define the Walsh-Fourier transform for a function $f_k = f(x_k)$ as follows:
\begin{align}
\label{walsh_angle}
    a_j &= \frac{1}{2^q}\sum_{k = 0}^{2^q - 1} f_k w_{jk}, \\
    f_k &= \sum_{j = 0}^{2^q - 1} a_j w_{jk}.
\end{align}

In qubit-based quantum computing, the state of $q$ qubits generally takes the form $|\psi\rangle = \sum_{k = 0}^{2^q - 1} c_k |k\rangle$, where the computational basis $|k\rangle$ is defined as
\begin{equation}
    |k\rangle = |k_1k_2\cdots k_q\rangle,
\end{equation}
with $k$ represented in dyadic form $\operatorname{dyad(k)} = (k_1k_2\cdots k_q)$. Now, let us define the unitary operator $\hat{U} = e^{i\hat{f}}$ \cite{Nielsen2000}, where $\hat{f}$ is a diagonal operator in the computational basis:
\begin{equation}
    \hat{f} |k\rangle = f_k |k\rangle.
\end{equation}

Walsh operators $\{\hat{w}_j\}_{j = 0}^{2^q - 1}$ acting on $q$ qubits are naturally defined as
\begin{equation}
\label{walsh_operators}
    \hat{w}_j = (\hat{Z_1})^{j_1} \otimes (\hat{Z_2})^{j_2} \otimes \cdots \otimes (\hat{Z_q})^{j_q},
\end{equation}
where $(\hat{Z}_i)^1 = \hat{Z}_i$ represents the Pauli $Z$ operator and $(\hat{Z}_i)^0 = \hat{I}$ denotes the identity matrix, both acting on the $i$-th qubit $q_i$. This definition of Walsh operators is advantageous because their action on the computational basis is given by
\begin{equation}
    \hat{w}_j|k\rangle = w_{jk}|k\rangle.
\end{equation}

This implies that the eigenvalues of Walsh operators $\hat{w}_j$ are the Walsh functions $w_{jk}$, and these operators form a basis for diagonal operators $\hat{f}$. Additionally, due to their form, Walsh operators commute. Therefore, considering $\hat{w}_0 = \hat{I}$, we can disregard $j = 0$, leading to the expression
\begin{equation}
\label{walsh_unitary}
    \hat{U} = e^{i\hat{f}} = \prod_{j = 1}^{2^q - 1} e^{ia_j\hat{w}_j}.
\end{equation}

In essence, to apply the method outlined in Ref. \cite{welch}, we begin by determining the $f_k$ values associated with the unitary gate $\hat{U}$. Subsequently, we construct the Walsh functions $w_{jk}$ using the procedure described in Appendix \ref{sec:appendixC}. With these components in hand, Eq. (\ref{walsh_angle}) allows us to compute the Walsh angles $a_j$. Finally, utilizing the identity presented in Appendix \ref{sec:appendixB} to construct the $\hat{w}_j$ operators in Eq. (\ref{walsh_unitary}) yields the desired unitary $\hat{U}$ with a gate cost of $\mathcal{O}(2^q)$ in general. This gate cost can be optimized by reordering the commuting exponential operators in Eq. (\ref{walsh_unitary}) using the Gray code. It is important to note that even with optimal construction, the quantum circuit for this method typically requires $\mathcal{O}(2^q)$ gates. However, as we will demonstrate in Appendix \ref{sec:appendixD}, for the specific case of the $q$-qubit unitary gate $\hat{U}(t)$ described in Eq. (\ref{unitary}), implementation with a polynomial gate cost is achievable by identifying the null Walsh angles $a_j(t)$.

\section{Relation between Pauli Z gates and CNOTs staircases}
\label{sec:appendixB}

In this Appendix, we delve into a fundamental identity pivotal to our analysis, which concerns the tensor product of Pauli $\hat{Z}$ operators. This identity plays a crucial role in simplifying the representation of quantum states and operations within our framework. To lay the groundwork for our discussion, we introduce essential notation and concepts: \\
- \( h_j \), the Hamming weight of \( j \), represents the number of \( 1 \)'s in the binary representation of \( j \), corresponding to the number of \( \hat{Z} \) operators in the tensor product. \\
- The identity operator \( \hat{I}^{\otimes r} \) acts on \( r \) qubits, serving as a placeholder in tensor products where no operation is performed. \\
- The operators \( \hat{A}_{h_j} \) are constructed from a sequence of controlled-NOT ( \( \widehat{CNOT} \) ) gates, defined as \( \hat{A}_{h_j} = \widehat{CNOT}_{h_j}^{1}  \widehat{CNOT}_{h_j}^{2} \cdots \widehat{CNOT}_{h_j}^{h_j - 1} \), where \( \widehat{CNOT}_{b}^{a} \) denotes a \( \widehat{CNOT} \) gate with qubit \( q_a \) as the control and qubit \( q_b \) as the target.

With these definitions in place, we establish the following identity:
\begin{equation} \label{ZZZ}
\hat{Z}_1 \otimes \hat{Z}_2 \otimes \cdots \otimes \hat{Z}_{h_j - 1} \otimes \hat{Z}_{h_j} = \hat{A}_{h_j}\left(\hat{I}^{\otimes (h_j - 1)} \otimes \hat{Z}\right) \hat{A}_{h_j}^{-1}.
\end{equation}
This identity demonstrates how a tensor product of \( \hat{Z} \) operators can be equivalently expressed through a transformation involving \( \hat{A}_{h_j} \) and its inverse, significantly simplifying the representation and manipulation of such operations. Building upon this foundation, we further examine its implications in the exponential form:
\begin{equation} \label{exp_Z}
e^{ia_j \left(\hat{Z}_1 \otimes \hat{Z}_2 \otimes \cdots \otimes \hat{Z}_{h_j - 1} \otimes \hat{Z}_{h_j}\right)} = \hat{A}_{h_j}\left(\hat{I}^{\otimes (h_j - 1)} \otimes e^{ia_j\hat{Z}}\right) \hat{A}_{h_j}^{-1}.
\end{equation}
This expression further underscores the utility of the \( \hat{A}_{h_j} \) transformation in facilitating the implementation of the exponential quantum gates \( e^{ia_j\hat{w}_j} \).

Now, we proceed with the proofs. For the calculations below, unless otherwise convenient, we do not specify the qubit index $i$ of Pauli operators $\hat{Z}_i$ or any other operators. We start by rewriting the left side of Eq. (\ref{ZZZ}) using the projectors $\hat{\Pi}_0 = |0\rangle\langle 0|$ and $\hat{\Pi}_1 = |1\rangle\langle 1|$:
\begin{align}
& \hat{Z} \otimes \hat{Z} \otimes \cdots \otimes \hat{Z} \otimes \hat{Z} \nonumber \\
&= (\hat{\Pi}_0 - \hat{\Pi}_1) \otimes (\hat{\Pi}_0 - \hat{\Pi}_1) \otimes \cdots \otimes (\hat{\Pi}_0 - \hat{\Pi}_1) \otimes \hat{Z}\notag \\
&= \sum_{\operatorname{bin(s)}} (-1)^{h_s} \hat{\Pi}_{s_1} \otimes \hat{\Pi}_{s_2} \otimes \cdots \otimes \hat{\Pi}_{s_{(h_j - 1)}} \otimes \hat{Z} \notag \\
\label{sum_proj}
&= \sum_{\substack{\operatorname{bin(s)} \\ 
h_s \text{ even}}} \hat{\Pi}_{s_1} \otimes \hat{\Pi}_{s_2} \otimes \cdots \otimes \hat{\Pi}_{s_{(h_j - 1)}} \otimes \hat{Z} \nonumber \\
& - \sum_{\substack{\operatorname{bin(s)} \\ 
h_s \text{ odd}}} \hat{\Pi}_{s_1} \otimes \hat{\Pi}_{s_2} \otimes \cdots \otimes \hat{\Pi}_{s_{(h_j - 1)}} \otimes \hat{Z},
\end{align}
where the sum on $\operatorname{bin(s)}$ concerns all the possible binary representations $\operatorname{bin(s)}= (s_{(h_j - 1)}s_{(h_j-2)}\cdots s_1)$ of $h_j - 1$ bits. It will be helpful to define 
\begin{align}
\Sigma_{\text{even}} &= \sum_{\substack{\operatorname{bin(s)} \\ h_s \text{ even}}} \hat{\Pi}_{s_1} \otimes \hat{\Pi}_{s_2} \otimes \cdots \otimes \hat{\Pi}_{s_{(h_j - 1)}}, \\
\Sigma_{\text{odd}} &= \sum_{\substack{\operatorname{bin(s)} \\ h_s \text{ odd}}} \hat{\Pi}_{s_1} \otimes \hat{\Pi}_{s_2} \otimes \cdots \otimes \hat{\Pi}_{s_{(h_j - 1)}}.
\end{align}
For these two definitions, the following relations are inherited from the projectors:
\begin{align}
& \Sigma_{\text{even}} \Sigma_{\text{even}} = \Sigma_{\text{even}}, \\
& \Sigma_{\text{odd}} \Sigma_{\text{odd}} = \Sigma_{\text{odd}}, \\
& \Sigma_{\text{even}} \Sigma_{\text{odd}} = \Sigma_{\text{odd}} \Sigma_{\text{even}} = 0.
\end{align}

Therefore, after defining $\hat{I}$ as the identity gate acting on a single qubit and recalling that $\hat{X}\hat{Z}\hat{X} = -\hat{Z}$, we obtain:
\begin{align}
& \hat{Z} \otimes \hat{Z} \otimes \cdots \otimes \hat{Z} \otimes \hat{Z} \nonumber \\
&= \Sigma_{\text{even}} \otimes \hat{Z} - \Sigma_{\text{odd}} \otimes \hat{Z}\notag \\
&= \Sigma_{\text{even}} \otimes \hat{Z} + \Sigma_{\text{odd}} \otimes \hat{X}\hat{Z}\hat{X} \notag \\  
&= \left(\Sigma_{\text{even}} \otimes \hat{I}\right)\left(\hat{I}^{\otimes (h_j - 1)} \otimes \hat{Z} \right)\left(\Sigma_{\text{even}} \otimes \hat{I}\right) \nonumber \\
& + \left(\Sigma_{\text{odd}} \otimes \hat{X}\right)\left(\hat{I}^{\otimes (h_j - 1)} \otimes \hat{Z} \right)\left(\Sigma_{\text{odd}} \otimes \hat{X} \right) \notag \\
&= \left(\Sigma_{\text{even}} \otimes \hat{I}\right)\left(\hat{I}^{\otimes (h_j - 1)} \otimes \hat{Z} \right)\left(\Sigma_{\text{even}} \otimes \hat{I}\right) \nonumber \\
& + \left(\Sigma_{\text{odd}} \otimes \hat{X}\right)\left(\hat{I}^{\otimes (h_j - 1)} \otimes \hat{Z} \right)\left(\Sigma_{\text{odd}} \otimes \hat{X} \right) \notag \\
&\quad + \left(\Sigma_{\text{odd}} \otimes \hat{X}\right)\left(\hat{I}^{\otimes (h_j - 1)} \otimes \hat{Z} \right)\left(\Sigma_{\text{even}} \otimes \hat{I}\right) \nonumber \\
& + \left(\Sigma_{\text{even}} \otimes \hat{I}\right)\left(\hat{I}^{\otimes (h_j - 1)} \otimes \hat{Z} \right)\left(\Sigma_{\text{odd}} \otimes \hat{X} \right) \notag \\
\label{Z_even_odd}
&=  \left(\Sigma_{\text{even}} \otimes \hat{I} + \Sigma_{\text{odd}} \otimes \hat{X}\right)\left(\hat{I}^{\otimes (h_j - 1)} \otimes \hat{Z} \right) \nonumber \\
& \times\left(\Sigma_{\text{even}} \otimes \hat{I} + \Sigma_{\text{odd}} \otimes \hat{X}\right).
\end{align}
To continue, we examine the product of two controlled-NOT gates targeting the same qubit:
\begin{align}    & \widehat{CNOT}_{3}^{1}\widehat{CNOT}_{3}^{2} \nonumber \\
&= \left(\hat{\Pi}_0 \otimes \hat{I} \otimes \hat{I} + \hat{\Pi}_1 \otimes \hat{I} \otimes \hat{X}\right) \nonumber \\ 
&\times \left(\hat{I} \otimes \hat{\Pi}_0 \otimes \hat{I} + \hat{I} \otimes \hat{\Pi}_1 \otimes \hat{X}\right) \notag \\
    &= \hat{\Pi}_0 \otimes \hat{\Pi}_0 \otimes \hat{I} + \hat{\Pi}_0 \otimes \hat{\Pi}_1 \otimes \hat{X} \nonumber \\ 
    & + \hat{\Pi}_1 \otimes \hat{\Pi}_0 \otimes \hat{X} + \hat{\Pi}_1 \otimes \hat{\Pi}_1 \otimes \hat{I} \notag \\
    \label{prod_cnot}
    &= \left(\hat{\Pi}_0 \otimes \hat{\Pi}_0 + \hat{\Pi}_1 \otimes \hat{\Pi}_1\right) \otimes \hat{I} \nonumber \\ 
    & + \left(\hat{\Pi}_1 \otimes \hat{\Pi}_0 + \hat{\Pi}_0 \otimes \hat{\Pi}_1\right) \otimes \hat{X}. \notag \\
\end{align}
The equation above suggests a similar form for a more general case. In fact, it holds that
\begin{align}
    \hat{A}_{h_j} &= \widehat{CNOT}_{h_j}^{1}  \widehat{CNOT}_{h_j}^{2}\cdots \widehat{CNOT}_{h_j}^{h_j-1} \nonumber \\
    &= \sum_{\substack{\operatorname{bin(s)} \\ h_s \text{ even}}} \hat{\Pi}_{s_1} \otimes \hat{\Pi}_{s_2} \otimes \cdots \otimes \hat{\Pi}_{s_{(h_j - 1)}} \otimes \hat{I} \nonumber \\
    & + \sum_{\substack{\operatorname{bin(s)} \\ h_s \text{ odd}}} \hat{\Pi}_{s_1} \otimes \hat{\Pi}_{s_2} \otimes \cdots \otimes \hat{\Pi}_{s_{(h_j - 1)}} \otimes \hat{X} \notag \\
    \label{cnot_even_odd}
    &= \Sigma_{\text{even}} \otimes \hat{I} + \Sigma_{\text{odd}} \otimes \hat{X}.\
\end{align}

Then, because $\hat{A}_{h_j}^{-1} = \hat{A}_{h_j}$, we obtain the proposed expression (\ref{ZZZ}) by using the identity (\ref{cnot_even_odd}) on Eq. (\ref{Z_even_odd}):
\begin{equation}
\hat{Z}_1 \otimes \hat{Z}_2 \otimes \cdots \otimes \hat{Z}_{h_j - 1}\otimes \hat{Z}_{h_j}
 = \hat{A}_{h_j}\left(\hat{I}^{\otimes (h_j - 1)} \otimes \hat{Z}\right) \hat{A}_{h_j}^{-1}.    
\end{equation}
Using this result, we can further demonstrate the validity of Eq. (\ref{exp_Z}):
\begin{align}
    & e^{ia_j \left(\hat{Z}_1 \otimes \hat{Z}_2 \otimes \cdots \otimes \hat{Z}_{h_j - 1} \otimes \hat{Z}_{h_j}\right)} \nonumber \\
    &= \sum_{n = 0}^{\infty}\frac{\left(ia_j\right)^n}{n!}\left(\hat{Z} \otimes \hat{Z} \otimes \cdots \otimes \hat{Z}\right)^n\notag \\
    &= \sum_{n = 0}^{\infty}\frac{\left(ia_j\right)^n}{n!}\left(\hat{A}_{h_j}\left(\hat{I}^{\otimes (h_j - 1)} \otimes \hat{Z}\right) \hat{A}_{h_j}^{-1}\right)^n \notag \\
    &= \sum_{n = 0}^{\infty}\frac{\left(ia_j\right)^n}{n!}\hat{A}_{h_j}\left(\hat{I}^{\otimes (h_j - 1)} \otimes \hat{Z}^n\right) \hat{A}_{h_j}^{-1} \notag \\
    &= \hat{A}_{h_j}\left(\hat{I}^{\otimes (h_j - 1)} \otimes \sum_{n = 0}^{\infty}\frac{\left(ia_j \hat{Z}\right)^n}{n!}\right) \hat{A}_{h_j}^{-1} \notag \\
    \label{exp_Z_2}
    &= \hat{A}_{h_j}\left(\hat{I}^{\otimes (h_j - 1)} \otimes e^{ia_j\hat{Z}}\right) \hat{A}_{h_j}^{-1}.
\end{align}

In this context, we revisit the formulation of Walsh operators, as delineated in Eq. (\ref{walsh_operators}), represented by \(\hat{w}_j = (\hat{Z_1})^{j_1} \otimes (\hat{Z_2})^{j_2} \otimes \cdots \otimes (\hat{Z_q})^{j_q}\), where the action of \(e^{ia_j\hat{w}_j}\) on a \(q\)-qubit basis state \(|k\rangle = |k_1k_2\cdots k_q\rangle\) is considered. To elaborate on the analysis, we introduce a strategic reordering of the indices \(j_i\), segregating them into two distinct sets: the first, denoted by \(\{m_i\}_{i = 1}^{h_j}\), corresponds to indices where \(j_{m_i} = 1\), spanning the initial \(h_j\) bits; the latter set, \(\{m_i\}_{i = h_j + 1}^{q}\), encompasses indices with \(j_{m_i} = 0\), accounting for the remaining \(q - h_j\) bits. The accordingly reconfigured states of qubits \(q_i\) and the operators \((\hat{Z}_i)^{j_i}\) can be achieved by applying SWAP gates. This culminates in the revised Walsh operator \(\hat{\Omega}_j\) and revised basis state \(|k'\rangle\), articulated as:
\begin{align} \hat{\Omega}_j =& \left(\hat{Z}_{m_1} \otimes \hat{Z}_{m_2} \otimes \cdots \otimes \hat{Z}_{m_{h_j}}\right) \nonumber \\ 
& \otimes \left(\hat{I}_{m_{h_j + 1}} \otimes \hat{I}_{m_{h_j + 2}} \otimes \cdots \otimes \hat{I}_{m_q}\right), \\ 
|k'\rangle =& |k_{m_1}k_{m_2}\cdots k_{m_{q}}\rangle.\end{align}

Leveraging Eq. (\ref{exp_Z_2}), we have
\begin{widetext}
\begin{align}
    e^{ia_j \hat{\Omega}_j} |k'\rangle &= e^{ia_j\left(\hat{Z}_{m_1} \otimes \hat{Z}_{m_2} \otimes \cdots \otimes \hat{Z}_{m_{h_j}}\right) \otimes \left(\hat{I}_{m_{h_j + 1}} \otimes \hat{I}_{m_{h_j + 2}} \otimes \cdots \otimes \hat{I}_{m_q}\right)}|k_{m_1}k_{m_2}\cdots k_{m_q}\rangle\notag \\
    &= e^{ia_j\left(\hat{Z}_{m_1} \otimes \hat{Z}_{m_2} \otimes \cdots \otimes \hat{Z}_{m_{h_j}}\right)} \otimes \left(\hat{I}_{m_{h_j + 1}} \otimes \hat{I}_{m_{h_j + 2}} \otimes \cdots \otimes \hat{I}_{m_q}\right) |k_{m_1}k_{m_2}\cdots k_{m_q}\rangle\notag \\
    \label{exp_Omega}
    &= \left[\hat{A}_{h_j}\left(\hat{I}^{\otimes (h_j - 1)} \otimes e^{ia_j\hat{Z}}\right) \hat{A}_{h_j}^{-1} \right] \otimes \hat{I}^{\otimes (q - h_j)} |k_{m_1}k_{m_2}\cdots k_{m_q}\rangle.
\end{align}
\end{widetext}
As it was stated previously, our objective lies in the action of the original operator $e^{ia_j\hat{w}_j}$ on the original basis state $|k\rangle$, i.e., $e^{ia_j\hat{w}_j}|k\rangle$. However, through the application of the same SWAP gates used before to Eq. (\ref{exp_Omega}), we restore the original sequence of qubits and Pauli operators, thereby preserving the structural integrity of the action of the operator $e^{ia_j \hat{w}_j}$ on the basis state $|k\rangle$.

\section{Walsh matrices construction}
\label{sec:appendixC}

A well-established result in the literature states that any discrete Walsh function \(w_{jk}\) can be represented as a product of Rademacher functions, with the exception of \(w_{0k}\), which trivially remains a constant function \(w_{0k} = 1\). Rademacher functions, denoted as \(w_{l_rk}\), are Walsh functions where \(l_r\) is a power of 2, specifically \(l_r = 2^{(r - 1)}\).

To illustrate, consider a Walsh matrix \(w_{(.,.)}\), where \(w_{jk} = w_{(j,k)}\), representing Paley-ordered Walsh functions arranged in a square matrix of dimensions \(2^q \times 2^q\). In matrix formalism, any row \(w_{(j,.)}\) can be understood as the column-wise product of the corresponding \(h_j\) Rademacher rows \(\{w_{({l_{m_i}},.)}\}_{i = 1}^{h_j}\), organized by ascending order of magnitude \(l_{m_1} < l_{m_2} < \cdots < l_{m_{h_j}}\). Here, \(h_j\) signifies the Hamming weight of \(j\). The Rademacher rows \(w_{({l_{m_i}},.)}\) satisfy \(j = \sum_{i = 1}^{h_j}l_{m_i} = \sum_{i = 1}^{h_j}2^{(m_i - 1)}\).

Here, we introduce this result as a proposition with minor adjustments. This adaptation enables us to introduce the required notation for the theorem detailed in the next Appendix, which concerns the implementation of $\hat{U}(t)$. Consistent with convention, we index the rows and columns of our Walsh matrix starting from $0$, with the maximum index value being $2^q - 1$.

\begin{proposition}\label{proposition1} Consider a positive integer $r$ satisfying $1 \leq r \leq q$, where $l_r = 2^{(r-1)}$. Let $\{m_i\}_{i = 1}^{h_j}$ be a sequence of integers in increasing order of magnitude and let $j$ be any row index of the $2^q \times 2^q$ Walsh matrix $w_{(.,.)}$. Then, the row $w_{(j,.)}$ will satisfy only one of the following statements: 
\begin{enumerate}
    \item If $j = l_{r}$, then the Rademacher row $w_{(j,.)} = w_{(l_{r},.)}$ of the Walsh matrix is given by
\begin{equation}
    w_{(l_{r},.)} = \bigl[(R_{r})(-R_{r})(R_{r})(-R_{r})\cdots(R_{r})(-R_{r})\bigr],
\end{equation}
where $R_{r} = (+1)(\times T_{r})$ is a row of length $T_{r} = 2^{(q - r)}$ and $T_r$ is called the period of the row $w_{(l_{r},.)}$. The notation means that $w_{(l_{r},.)}$ is composed of sign alternating sequences of $T_{r}$ elements. These sequences are $R_{r}$ and $-R_{r}$ and all their elements are, respectively, equal to $+1$ and $-1$. 
    \item If $j \neq l_r$, we define $j = \sum_{i = 1}^{h_j}l_{m_i} = \sum_{i = 1}^{h_j}2^{(m_i - 1)}$, then the row $w_{(j,.)}$ of the Walsh matrix is given by
\begin{equation}
    w_{(j,.)} = \bigl[(R_{m_1})(-R_{m_1})(R_{m_1})(-R_{m_1})\cdots(R_{m_1})(-R_{m_1})\bigr],
\end{equation}
where $R_{m_1}$ is obtained from the recursive relation
\begin{equation}
    R_{m_{(i-1)}} = \bigl[(R_{m_i})(-R_{m_i})(R_{m_i})(-R_{m_i})\cdots(R_{m_i})(-R_{m_i})\bigr],
\end{equation}
with $R_{m_{0}} = w_{(j,.)}$ and $R_{m_{h_j}} = (+1)(\times T_{m_{h_j}})$. Each $R_{m_{i}}$, for $1 \leq i \leq h_j$, is a row of length $T_{m_{i}} = 2^{(q - m_i)}$.
\end{enumerate}
\end{proposition}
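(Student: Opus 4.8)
The plan is to reduce the whole statement to the multiplicative structure of the Walsh functions and then read off the nested-block pattern by a short descending recursion on the set bits of $j$. The key identity is that the definition $w_{jk} = (-1)^{\sum_{i=1}^q j_i k_i}$ factorizes over the nonzero bits of $j$: writing $j = \sum_{i=1}^{h_j} 2^{m_i-1}$ with $m_1 < m_2 < \cdots < m_{h_j}$, one has $\sum_i j_i k_i = \sum_{i=1}^{h_j} k_{m_i}$, hence $w_{(j,k)} = \prod_{i=1}^{h_j}(-1)^{k_{m_i}} = \prod_{i=1}^{h_j} w_{(l_{m_i},k)}$. Thus every row is the column-wise product of the Rademacher rows indexed by its set bits; I would record this one-line form of the quoted result first, as it drives the rest.

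First I would dispose of Statement 1, the Rademacher case $j = l_r$. Here only the bit $j_r$ is set, so $w_{(l_r,k)} = (-1)^{k_r}$, and since $\mathrm{dyad}(k)=(k_1\cdots k_q)$ assigns $k_r$ the weight $2^{q-r}$, the bit $k_r$ is constant on each run of $T_r = 2^{q-r}$ consecutive values of $k$ and flips between runs as $k$ sweeps $0,1,\ldots,2^q-1$. This is exactly the alternating pattern $[(R_r)(-R_r)\cdots]$ with $R_r = (+1)(\times T_r)$, giving Statement 1, and it also fixes the base of the recursion, $R_{m_{h_j}} = (+1)(\times T_{m_{h_j}})$, as the leading length-$T_{m_{h_j}}$ block of the highest-frequency factor.

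For Statement 2 I would run the recursion on $i$, identifying $R_{m_i}$ with the leading length-$T_{m_i}$ block of the partial product $P_i = \prod_{s=i}^{h_j} w_{(l_{m_s},.)}$, where $T_{m_i} = 2^{q-m_i}$, and $R_{m_0} = P_1 = w_{(j,.)}$ with the full row. The step rests on two facts from Statement 1: (a) $w_{(l_{m_i},.)}$ equals $+1$ on its first $T_{m_i}$ columns and $-1$ on the next $T_{m_i}$; and (b) every factor of $P_{i+1}$ carries an index $m_s > m_i$, hence half-period at most $T_{m_i}/2$, so $P_{i+1}$ repeats with period dividing $T_{m_i}$. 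Combining (a) and (b), one full period of $P_i$, of length $2T_{m_i}$, equals $[(R_{m_i})(-R_{m_i})]$, since the first half reproduces $R_{m_i}$ while the second half negates the periodically repeated $R_{m_i}$. For $2 \leq i \leq h_j$ the factor $w_{(l_{m_{i-1}},.)}$ is $+1$ throughout the leading $T_{m_{i-1}} = 2^{m_i-m_{i-1}}T_{m_i}$ columns, so $P_{i-1}$ agrees with $P_i$ there and $R_{m_{i-1}}$ is the tiling of that block by $2^{m_i-m_{i-1}-1}$ copies of $[(R_{m_i})(-R_{m_i})]$, which is the stated recursion; the top case $R_{m_0}=w_{(j,.)}=P_1$ is the complete tiling of $P_1$ by its own period, closing the argument.

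The main obstacle I anticipate is bookkeeping rather than conceptual content: one has to keep the two conventions aligned (MSB-left binary for $j$ against dyadic weights $2^{q-i}$ for $k$), confirm that the period of $P_{i+1}$ genuinely divides the block length $T_{m_i}$ for every admissible gap $m_i - m_{i-1} \geq 1$, and check that the number $2^{m_i-m_{i-1}-1}$ of blocks is always a positive integer. I would also settle the ``exactly one statement'' dichotomy: $j = l_r$ for some $r$ precisely when $h_j = 1$, where only Statement 1 applies, whereas $h_j \geq 2$ forces Statement 2; the degenerate index $j = 0$, whose row is the constant $+1$, sits outside both and I would flag it as the trivial case already set aside in the preamble.
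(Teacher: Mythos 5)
Your proposal is correct and takes essentially the same route as the paper: Statement 1 is read off from the behavior of the bit $k_r$ (weight $2^{q-r}$ in the dyadic ordering), and Statement 2 rests on the same key identity, the factorization $w_{(j,k)} = \prod_{i=1}^{h_j} w_{(l_{m_i},k)}$ over the set bits of $j$, combined with the nested power-of-two periods of the Rademacher rows. Your explicit induction on the partial products $P_i$ is a tidier formalization of the paper's block-counting argument (its $L_{[1,i]}$ and $L_{[i',i]}$ computations and the verbal identification $w_{(j,.)} = R_{m_0}$), but the decomposition and the driving lemma are identical.
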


\begin{proof}
We shall prove each case of Proposition \ref{proposition1} separately.

\textbf{1. Case \(j = l_r\).}
The binary representation of $j = l_r = 2^{(r - 1)}$ is given by
\begin{equation}     
\operatorname{bin(j)} = (000\cdots01_r0\cdots000),
\end{equation}
where we have introduced the notation $1_{r} = j_r = 1$ to make it clear that the only non-zero binary element of $\operatorname{bin(j)}$ is in the position $r$. 
\begin{definition}
We define the exponent $S(j,k)$ of Eq. $(\ref{walsh_functions})$ as
\begin{equation}
   S(j,k) = \sum_{i = 1}^qj_ik_i. 
\end{equation}
If $S(j,k)$ is even, then $w_{(j,k)} = 1$. If $S(j,k)$ is odd, then $w_{(j,k)} = -1$.
\end{definition}
Then, for any $0 \leq k \leq 2^q - 1$, we have
\begin{align}
k_{r} &= 0 \implies S(j,k) = j_{r}k_{r} = \operatorname{even}, \\
k_{r} &= 1 \implies S(j,k) = j_{r}k_{r} = \operatorname{odd.}
\end{align}
Now, we consider only a partial dyadic representation string $\operatorname{dyad_{r}(k)}$ of $k$. Since the dyadic string of $k$ is the same as its reverse binary string, a partial dyadic string up to the position $r$ is defined as 
\begin{equation}
\label{dyad_partial}
 \operatorname{dyad_{r}(k)} = (k_{r}k_{r + 1}\cdots k_{(q - 1)}k_q).  
\end{equation}
There are $q - r + 1$ entries in the partial dyadic string of Eq. (\ref{dyad_partial}), resulting in a total of $2^{(q - r + 1)}$ combinations. The first half of these combinations corresponds to $k_{r} = 0$ and the second half corresponds to $k_{r} = 1$. That is, the first $2^{(q - r)}$ elements of $S(j,.)$ are even and the next $2^{(q - r)}$ elements of $S(j,.)$ are odd. Note that we did not consider the other $r - 1$ entries that show up in the complete dyadic string $\operatorname{dyad(k)}$ of $k$. This is justified by the fact that the pattern we just described will remain true for any configuration of the disregarded $r - 1$ entries, as changing any of these entries allows another total of $2^{(q - r + 1)}$ possible combinations with the same pattern for $\operatorname{dyad_r(k)}$. From this, we infer that $S(j,.)$ is completely defined by alternating sequences of even and odd elements, with each sequence having length $T_{r} = 2^{(q - r)}$. Thus, using that $w_{(j,k)} = (-1)^{S(j,k)}$, the Rademacher row $w_{(j,.)} = w_{(l_{r},.)}$ will be given by
\begin{equation}
    w_{(l_{r},.)} = \bigl[(R_{r})(-R_{r})(R_{r})(-R_{r})\cdots(R_{r})(-R_{r})\bigr],
\end{equation}
with $R_{r} = (+1)(\times T_{r})$ and period $T_r = 2^{(q - r)}$.

\textbf{2. Case \(j \neq l_r\).} Even though $j$ here cannot be written as a power of $2$, it can always be written as a sum of these powers: 
\begin{equation}
    j = \sum_{i = 1}^{h_j}l_{m_i} =\sum_{i = 1}^{h_j}2^{(m_i - 1)},
\end{equation}
and we conveniently choose the indices $m_i$ to match the positions of the non-zero binary elements of $j$, i.e., $j_{m_i} = 1$. That is, if we consider only the non-zero binary elements $\operatorname{bin_{\neq 0}(j)}$ of the binary representation $\operatorname{bin(j)}$ of $j$, then 
\begin{equation} \operatorname{bin_{\neq 0}(j)} = (j_{m_{h_j}}\cdots j_{m_2}j_{m_1}).
\end{equation}
Thus, for any $0 \leq k \leq 2^q - 1$ we have
\begin{equation}
    S(j,k) = \sum_{i = 1}^qj_ik_i = \sum_{i = 1}^{h_j}j_{m_i}k_{m_i}.
\end{equation}
Because each term $j_{m_i}k_{m_i}$ in the sum above is the corresponding $l_{m_i}$ term $S(l_{m_i},k)$, it must be true that
\begin{align}
    w_{(j,k)} &= (-1)^{\sum_{i = 1}^{h_j}S(l_{m_i},k)} \notag\\
    &= \prod_{i = 1}^{h_j}(-1)^{S(l_{m_i},k)} \notag \\
    \label{wjk_prod}
    &= \prod_{i = 1}^{h_j}w_{(l_{m_i},k)}.
\end{align}
Eq. (\ref{wjk_prod}) means that any row $w_{(j,)}$ is the column-wise product of the corresponding Rademacher rows $w_{(l_{m_i},.)}$ such that $j = \sum_{i = 1}^{h_j}l_{m_i}$. 

Since $l_{m_1} < l_{m_2} < \cdots < l_{m_{h_j}}$, the respective periods $T_{m_i} = 2^{(q - m_i)}$, for $1 \leq i \leq h_j$, must obey $T_{m_1} > T_{m_2} > \cdots > T_{m_{h_j}}$. We know that for $0 \leq k \leq T_{m_1} - 1$, we have $w_{(l_{m_1},k)} = +1$. Then, if $0 \leq k \leq T_{m_1} - 1$, the elements $w_{(j,k)}$ of Eq. (\ref{wjk_prod}) can be written as $w_{(j,k)} = \prod_{i = 2}^{h_j}w_{(l_{m_i},k)}$. Now, any Rademacher row $w_{(l_{m_i},.)}$ has a period $T_{m_i}$ and thus is composed of alternating pairs of sequences, where each pair is made of a sequence of $T_{m_i}$ terms all equal to $+1$ and a sequence of $T_{m_i}$ terms all equal to $-1$. This implies that the number of times $L_{[1,i]}$ we can fit these pairs of $2T_{m_i}$ terms into the length of the largest period $T_{m_1}$ is given by
\begin{align}
    L_{[1,i]} &= \frac{T_{m_1}}{2T_{m_i}} \notag\\
    &= \frac{2^{(q - m_1)}}{2 \times 2^{(q - m_i)}} \notag\\
    &=  \frac{2^{(m_i - m_1)}}{2} \notag\\
    &= \frac{l_{m_i}}{2l_{m_1}},
\end{align}
which evidently shows, for $i \geq 2$, that $L_{[1,i]}$ is an integer, specifically, a power of $2$. Now, for $T_{m_1} \leq k \leq 2T_{m_1} - 1$ we must have $w_{(l_{m_1},k)} = -1$ , and the pattern of $w_{(j,.)}$ will occur again for these next $T_{m_1}$ values of $k$, except that in this case we get a sign change, i.e., $w_{(j,k)} = -\prod_{i = 2}^{h_j}w_{(l_{m_i},k)}$. After these first $2T_{m_1}$ terms, the periodicity of $w_{(l_{m_1,.)}}$ implies that the aforementioned pattern will continue to repeat itself until we have the row $w_{(j,.)}$ completely filled. From that, we conclude that $T_{m_1}$ is a common period for every composing Rademacher row of $w_{(j,.)}$. We can also analyze how many pairs of $2T_{m_i}$ terms fit into the length of any other period $T_{m_{i'}} > T_{m_i}$, that is, calculate the expression for $L_{[i',i]}$. In this case, for any $i > i'$, we have
\begin{align}
    L_{[i',i]} &= \frac{T_{m_{i'}}}{2T_{m_i}} \notag\\
    &= \frac{2^{(q - m_{i'})}}{2 \times 2^{(q - m_i)}} \notag\\
    &= \frac{2^{(m_i - m_{i'})}}{2} \notag\\
    &= \frac{l_{m_i}}{2l_{m_{i'}}}.
\end{align}

This shows that any period $T_{m_{i'}}$, for $h_j - 1 \geq i' \geq 1$, has a length that can be perfectly fit by an integer number of pairs of $2T_{m_i}$ terms such that $T_{m_{i'}} > T_{m_i}$. The extreme case of $T_{m_{h_j}}$, however, will never have any lower period, a relevant fact that leads to the following definition:
\begin{definition}
To proceed, for $1 \leq i\leq h_j$, we define the recursive relation
\begin{equation}
\label{recursive}
    R_{m_{(i-1)}} = \bigl[(R_{m_i})(-R_{m_i})(R_{m_i})(-R_{m_i})\cdots(R_{m_i})(-R_{m_i})\bigr],
\end{equation}
where $R_{m_{h_j}} = (+1)(\times T_{m_{h_j}})$, and $R_{m_0}$ and $R_{m_i}$ have respective periods $T_{m_0} = 2^q$ and $T_{m_i} = 2^{(q - m_i)}$.  
\end{definition}
Utilizing the recursive relation in Eq. (\ref{recursive}), we demonstrate that $w_{(j,.)} = R_{m_0}$. This recursive relation maintains a critical connection with the Rademacher rows $w_{(l_{m_i},.)}$. The initial condition $R_{m_{h_j}} = (+1)(\times T_{m_{h_j}})$ is chosen due to $T_{m_{h_j}}$ being the minimal period, thus serving as the recursive sequence's base case. For periods $T_{m_{(i - 1)}} > T_{m_{h_j}}$, the sequence iteratively incorporates $R_{m_i}$, further integrating the effects of $R_{m_{(i + 1)}}$ until the base case $R_{m_{h_j}} = (+1)(\times T_{m_{h_j}})$ is reached. Particularly, $R_{m_0}$ is synthesized through several $(R_{m_1})(-R_{m_1})$ pairs, effectively encapsulating the periodic characteristics of the composing Rademacher rows of $w_{(j,.)}$. Therefore, each term in the sequence $R_{m_1}$ of $T_{m_1}$ terms corresponds to either $\prod_{i = 2}^{h_j}w_{(l_{m_i},k)}$ or its negative counterpart, $-\prod_{i = 2}^{h_j}w_{(l_{m_i},k)}$, for the relevant $T_{m_1}$ columns $k$. As we have shown, the products $\pm \prod_{i = 2}^{h_j}w_{(l_{m_i},k)}$ fully compose the row $w_{(j,.)}$, leading to the conclusion that $w_{(j,.)} = R_{m_0}$. Hence, we have:
\begin{equation}
w_{(j,.)} = \bigl[(R_{m_1})(-R_{m_1})(R_{m_1})(-R_{m_1})\cdots(R_{m_1})(-R_{m_1})\bigr].
\end{equation}
\end{proof}

\section{Demonstration of polynomial cost}
\label{sec:appendixD}

As we have shown in the first Appendix, to use the method of Ref. \cite{welch}, we must calculate the Walsh angles $a_j$, given by
\begin{align}
\label{aj_matrix_form}
    a_j &= \frac{1}{2^q}\sum_{k = 0}^{2^q - 1}f_kw_{jk} \notag \\
    &= \frac{1}{2^q}[w_{(j,.)}] \times [\vec{f}]^{T},
\end{align}
where we have introduced the symbol $T$ for transposition, since $\vec{f}$ is defined as a row vector. The components $f_k$ of $\vec{f}$ are extracted as the eigenvalues of the operator
\begin{equation}
    \hat{f} = \sum_{k = 0}^{2^q - 1}f_k|k\rangle \langle k|.
\end{equation}

Then, a general unitary operator of the form $\hat{U} = e^{i\hat{f}}$, when represented in the computational basis $|k\rangle$, is given by the following diagonal matrix:
\begin{equation}
U =
\begin{bmatrix}
D_{1} & 0 & \cdots & 0 \\
0 & D_{2} & \cdots & 0 \\
\vdots & \vdots & \ddots & \vdots \\
0 & 0 & \cdots & D_{d}
\end{bmatrix},
\end{equation}
where $d^2 = 2^q$, with $d$ a power of $2$, and the $D_s$ are diagonal matrices for $1 \leq s \leq d$. In this paper, we define these diagonal matrices $D_s$ as
\begin{equation}
D_s =
\begin{bmatrix}
e^{i\varphi_s} & 0 & \cdots & 0 \\
0 & e^{2i\varphi_s} & \cdots & 0 \\
\vdots & \vdots & \ddots & \vdots \\
0 & 0 & \cdots & e^{di\varphi_s}
\end{bmatrix},
\end{equation}
where $i^2 = -1$, $\varphi_s = s\kappa$ and $\kappa$ is a real number. Comparing with the unitary gate $\hat{U}(t)$ of Eq. (\ref{unitary}) for our quantum algorithm, we must use $\kappa = -\omega t$. However, since $\kappa$ is a global factor for $\vec{f}$ and thus a multiplicative factor in the expression for $a_j$ in Eq. (\ref{aj_matrix_form}), we can choose $\kappa = 1$ to facilitate further calculations. The vector $\vec{f}$ that we are going to use is then given by
\begin{align}
\label{vec_f}
\vec{f} = & [(1,2,3,\cdots,d),(2,4,6,\cdots,2d), \nonumber \\
& (3,6,9,\cdots,3d),(4,8,12,\cdots,4d), \nonumber \\
& \cdots,(d,2d,3d,\cdots,d^2)].
\end{align}

There is also another global factor $\frac{1}{2^q} = \frac{1}{d^2}$ in the expression (\ref{aj_matrix_form}), which can be equally disregarded in the calculations. Taking into account these two global factors, $a_j$ will be redefined to be
\begin{equation}
    a_j = [w_{(j,.)}] \times [\vec{f}]^{T},
\end{equation}
with $\vec{f}$ given by Eq. (\ref{vec_f}). Thus, the relevant Walsh angles $a_j(t)$ for $\hat{U}(t)$ should be defined as 
\begin{equation}
\label{aj(t)}
a_j(t) = \frac{-\omega t}{d^2}a_j.
\end{equation}

Since there is no risk of confusion, $a_j$ and $a_j(t)$ are both referred to as `Walsh angles' in this work. The following theorem concerns the derivations of the expressions for the non-zero Walsh angles $a_j(t)$ corresponding to the unitary gate $\hat{U}(t)$ of Eq. (\ref{unitary}). Even though we are using $q$ even here, it should be noted that a similar theorem holds for odd values of $q$, opening possibilities for the efficient implementation of any $2^q \times 2^q$ diagonal unitary gate $\hat{U}(t)$ with the form of Eq. (\ref{unitary}). \\

\begin{theorem} \label{theorem2}
Let $W$ be the set of non-zero Walsh angles $a_j(t)$ for the $q$-qubit unitary gate $\hat{U}(t)$ of Eq. (\ref{unitary}), with $q$ even. Then 
\begin{equation}
W = W_1 \cup W_2,
\end{equation}
with
\begin{equation}
W_1 = \left\{ a_j(t) = 
\begin{cases} 
\frac{d(1+d)\omega t}{8j} & \text{if $j \leq d/2$}, \\
\frac{d^2(1+d)\omega t}{8j} & \text{if $j \geq d$},
\end{cases}
\Big| h_j = 1 \right\},
\end{equation}
and
\begin{align}
W_2 = & \Big\{ a_j(t) = \frac{-d^3\omega t}{16l_{m_1}l_{m_2}} \Big| h_j = 2  \\ 
& \text{ and } j = l_{m_1} + l_{m_2} \text{ for } l_{m_1} \leq d/2 \text{ and } l_{m_2} \geq d \Big\}, \nonumber
\end{align}
where $h_j$ is the Hamming weight of $j$ and $d^2 = 2^q$. 
\end{theorem}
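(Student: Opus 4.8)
The plan is to exploit the product structure of the phase function together with the multiplicativity of the Walsh functions across the two qubit registers. By Eq.~(\ref{aj(t)}) it suffices to evaluate the stripped angle $a_j = [w_{(j,.)}]\times[\vec f]^{T}$ and then multiply by $-\omega t/d^2$. The essential observation is that $\vec f$ in Eq.~(\ref{vec_f}) lists, in enumeration order of $k$, the products $f_k = n_A n_B$ with $n_A,n_B\in\{1,\dots,d\}$, where $n_A-1$ and $n_B-1$ are the values stored in the first and second halves of the $q$ qubits (each carrying $p=q/2$ qubits since $d^2=2^q$). Because $w_{jk}=(-1)^{\sum_i j_ik_i}$ of Eq.~(\ref{walsh_functions}) splits into a factor depending only on the first-register bits and a factor depending only on the second-register bits, and because $f_k$ itself factorizes as $n_A\cdot n_B$, the sum over $k$ separates. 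I would therefore write
\begin{equation}
a_j = g(j_A)\,g(j_B),
\end{equation}
where $j_A,j_B$ are the integers formed by the two halves of the binary string of $j$, and
\begin{equation}
g(a) = \sum_{v=0}^{d-1} w^{(p)}_{av}\,(v+1)
\end{equation}
is the single-register Walsh transform of the linear ramp $(1,2,\dots,d)$; the transform is the same for both registers, so the cross-pairing of halves is immaterial to the product.

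The second step is to evaluate $g$. Splitting $g(a)=\sum_v w^{(p)}_{av}\,v+\sum_v w^{(p)}_{av}$, the last sum is $d\,\delta_{a,0}$ by orthogonality. For the linear term I would write the ramp as a sum of its individual bits; each single-bit coordinate function has a Walsh transform supported only on the corresponding Rademacher (Hamming-weight-one) index, so that $g(a)=0$ whenever $h_a\geq2$. This is precisely the sparsity that forces the nonzero angles to have $h_j\leq2$. The two surviving cases are then computed directly: the weight-zero case gives $g(0)=\sum_{v=1}^{d}v=d(d+1)/2$, and for a Rademacher index $a=l_\rho$ I would invoke Proposition~\ref{proposition1}, which states that $w^{(p)}_{(l_\rho,.)}$ is a square wave of run-length $T_\rho=2^{\,p-\rho}$; grouping the ramp into its $2^\rho$ alternating sign-blocks and summing the arithmetic progression in each block yields $g(l_\rho)=-d^2/(4\,l_\rho)$.

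The final step is bookkeeping. Since $a_j\neq0$ requires both halves of $j$ to have Hamming weight at most one, and discarding $j=0$ (the identity $\hat w_0$), only $h_j=1$ and $h_j=2$ remain. If $h_j=1$, the nonzero bit lies in the first register (so $j\leq d/2$ and $a_j=g(0)g(l_\rho)$) or in the second register (so $j\geq d$, where the stored value is rescaled by $1/d$, producing an extra power of $d$); inserting $g(0)$ and $g(l_\rho)$ together with the prefactor $-\omega t/d^2$ reproduces the two branches of $W_1$. If $h_j=2$, both halves carry exactly one bit, forcing $j=l_{m_1}+l_{m_2}$ with $l_{m_1}\leq d/2$ and $l_{m_2}\geq d$, and the product $g(l_{m_1})\,g(l_{m_2}/d)$ times the prefactor gives $W_2$.

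The main obstacle I anticipate is fixing the index/frequency convention exactly, namely that the Paley index $l_\rho$ corresponds to run-length $2^{\,p-\rho}$ rather than $2^{\rho-1}$; this is what makes the surviving coefficient $-d^2/(4\,l_\rho)$ \emph{inversely} proportional to $l_\rho$ rather than growing with it, and hence produces the $1/j$ and $1/(l_{m_1}l_{m_2})$ forms in the statement. Everything hinges on Proposition~\ref{proposition1} at this point. The only remaining care is to track the two discarded global factors ($1/2^q$ and $-\omega t$) and the second-register rescaling $j_B=l_{m_2}/d$, which is exactly what supplies the extra power of $d$ distinguishing the $j\leq d/2$ and $j\geq d$ branches of $W_1$.
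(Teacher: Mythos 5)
Your proposal is correct, and it reaches the theorem by a genuinely different route than the paper. You exploit the separability of the phase function, $f_k = n_A n_B$, together with the multiplicativity of Paley--Walsh functions across the two $p=q/2$-qubit registers, to factor the transform as $a_j = g(j_A)\,g(j_B)$, where $g$ is the single-register Walsh transform of the ramp $(1,\dots,d)$; the bit-decomposition $v_\rho = \tfrac12(1 - w_{l_\rho v})$ then shows $g$ is supported on Hamming weight $\leq 1$ indices, and the two scalar evaluations $g(0)=d(d+1)/2$ and $g(l_\rho) = -d^2/(4l_\rho)$ reproduce every entry of $W_1$ and $W_2$ after restoring the factor $-\omega t/d^2$ (your values agree exactly with the paper's stripped angles $-\tfrac{(1+d)d^3}{8j}$, $-\tfrac{(1+d)d^4}{8j}$ and $\tfrac{d^5}{16 l_{m_1} l_{m_2}}$ in its sub-cases 1.1, 1.2 and 2.3). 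The paper instead performs a direct case analysis on $h_j = 1, 2, \geq 3$ with sub-cases on whether each $l_{m_i} \leq d/2$ or $\geq d$, evaluating each row-vector product $[w_{(j,.)}]\times[\vec f\,]^{T}$ by hand via the recursive block structure of Walsh rows (Proposition \ref{proposition1}) and a calculus of elementary, extended and partial vectors. Your factorization buys three things the paper's argument does not make visible: it explains structurally why only $h_j \leq 2$ survives (each register can contribute at most one Rademacher index), it makes the cross-register condition $l_{m_1}\leq d/2$, $l_{m_2}\geq d$ in $W_2$ transparent (exactly one bit per register), and it generalizes immediately to any diagonal phase of product form $f_A \otimes f_B$. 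Conversely, the paper's machinery is self-contained at the level of Walsh-row combinatorics and is reused in its gate-counting Lemma. The one point your argument genuinely hinges on --- and which you correctly flag --- is the Paley pairing convention, under which the index $l_\rho$ corresponds to run-length $2^{p-\rho}$, forcing $g(l_\rho)$ to scale as $1/l_\rho$; with that fixed, your bookkeeping of the two discarded global factors and of the rescaling $j_B = l_{m_2}/d$ (the origin of the extra power of $d$ separating the two branches of $W_1$) is sound, so the proposal constitutes a complete and shorter proof once the sketched steps are written out.
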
 

The quantum circuit corresponding to the implementation of $\hat{U}(t)$ is presented for $d = 4$ in Fig. \ref{fig:U_d4_SM}. As it is evident, Theorem \ref{theorem2} eliminates the vast majority of the $2^q - 1$ Walsh angles $a_j(t)$ that would be necessary, in general, to implement this type of unitary gate exactly.  In Lemma \ref{lemma}, we show that the number of gates needed to exactly implement $\hat{U}(t)$ is a polynomial function of $q$.

\begin{figure*}[t]
    \centering    \includegraphics[width=0.8\linewidth]{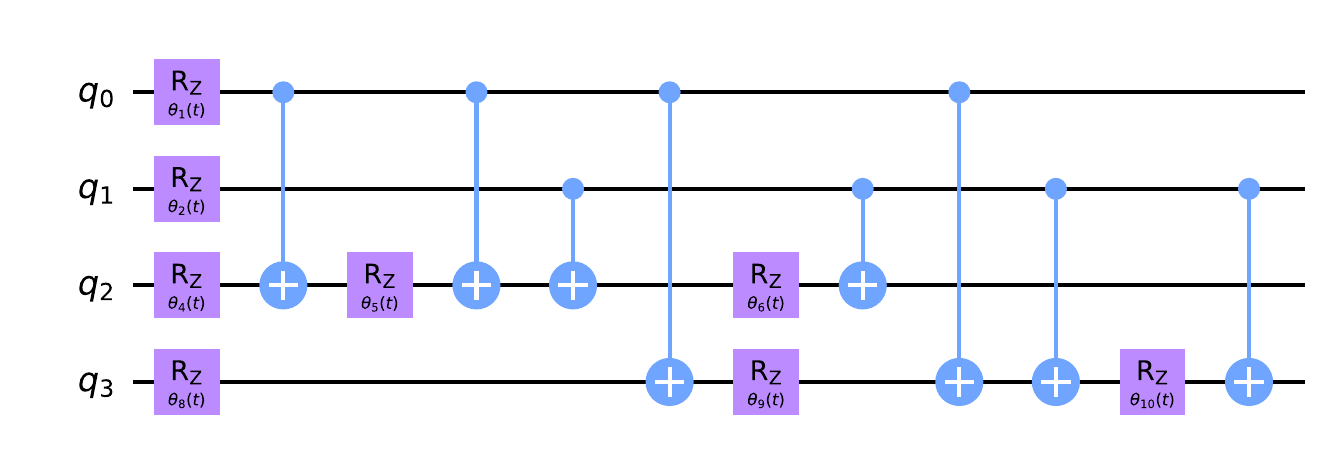}    \captionsetup{justification=raggedright,singlelinecheck=false}
    \caption{Quantum circuit that implements the unitary gate $\hat{U}(t)$ for $d = 4$. We used the mathematical identity demonstrated in Appendix \ref{sec:appendixB} to express an exponential of $Z$ Pauli gates as a single $Z$-rotation $\hat{R}_z(\theta_j(t))$, with the angles given by $\theta_j(t) = -2a_j(t)$, and several controlled-NOT gates.}
    \label{fig:U_d4_SM}
\end{figure*}

In what follows, we list some definitions and their respective properties, in order to use them in the proof of Theorem \ref{theorem2}, that will be presented afterwards.

\begin{definition}
\label{elementary_vector}
Elementary vectors are defined as 
\begin{equation}
\vec{P_{\eta}} = [\eta,2\eta,3\eta,\cdots,d\eta].
\end{equation}
\end{definition}

\begin{definition}
\label{extended_vector}
Extended vectors are defined as 
\begin{equation}
\vec{P}_{\{\alpha|\alpha + \beta\}} = [\vec{P}_{\alpha},\vec{P}_{\alpha+1},\vec{P}_{\alpha+2},\cdots,\vec{P}_{\alpha + \beta}].
\end{equation}
\end{definition}

\begin{definition}
\label{partial_vector}
Partial vectors are defined as
\begin{equation}
\vec{p}^{\,(\eta)}_{\{\alpha|\alpha + \beta\}} = [p^{(\eta)}_{\alpha},p^{(\eta)}_{\alpha + 1},p^{(\eta)}_{\alpha + 2},\cdots,p^{(\eta)}_{\alpha + \beta}],
\end{equation}
where we defined the notation $\vec{P_{\eta}} = [p^{(\eta)}_1,p^{(\eta)}_2,p^{(\eta)}_3,\cdots,p^{(\eta)}_d]$ for elementary vectors.
\end{definition}

\begin{property}
Using Eq. (\ref{vec_f}) for $\vec{f}$ and definition \ref{elementary_vector} for $\vec{P}_\eta$, we can write 
\begin{equation}
\vec{f} = [\vec{P_1},\vec{P_2},\vec{P_3},\cdots,\vec{P_d}].
\end{equation}
\end{property}

\begin{property}
\label{P_linear}
By the definition \ref{elementary_vector} for elementary vectors $\vec{P}_\eta$, it follows that 
\begin{equation}
\vec{P_{\eta}} = \eta \vec{P_1}.
\end{equation}
\end{property}

\begin{property}
\label{P_diff}
It follows directly from property \ref{P_linear} that
\begin{equation}
\vec{P_{\eta}} - \vec{P}_{\eta + \Delta} = -\Delta\vec{P_1}.
\end{equation}
\end{property}

\begin{property}
\label{extended_P_diff}
Applying property \ref{P_diff} for each elementary vector of definition \ref{extended_vector} gives us
\begin{align}
& \vec{P}_{\{\alpha|\alpha+\beta\}} - \vec{P}_{\{\alpha + \Delta|\alpha + \Delta +\beta\}} \nonumber \\
=  &[\vec{P}_{\alpha},\vec{P}_{\alpha+1},\vec{P}_{\alpha+2},\cdots,\vec{P}_{\alpha + \beta}] \nonumber \\
& - [\vec{P}_{\alpha + \Delta},\vec{P}_{\alpha + \Delta+1},\vec{P}_{\alpha + \Delta +2},\cdots,\vec{P}_{\alpha + \Delta + \beta}] \notag\\
=& \big[\vec{P}_{\alpha} - \vec{P}_{\alpha + \Delta},\vec{P}_{\alpha+1} - \vec{P}_{\alpha + \Delta + 1},\vec{P}_{\alpha+2} \nonumber \\
& - \vec{P}_{\alpha + \Delta + 2},\cdots,\vec{P}_{\alpha + \beta} - \vec{P}_{\alpha + \Delta + \beta}\big] \notag\\
=& -\Delta[\vec{P}_{1},\vec{P}_{1},\vec{P}_{1},\cdots,\vec{P}_{1}].
\end{align}
\end{property}

\begin{property}
\label{p_diff}
It follows, respectively, from the definition \ref{partial_vector} of partial vectors $\vec{p}^{\,(\eta)}_{\{\alpha|\alpha + \beta\}}$, property \ref{P_linear}, and definition \ref{elementary_vector} applied to $\vec{P}_1$, that
\begin{align}
& \vec{p}^{\,(\eta)}_{\{\alpha|\alpha + \beta\}} - \vec{p}^{\,(\eta)}_{\{\alpha + \Delta|\alpha + \Delta + \beta\}} \nonumber \\ =& \big[p^{(\eta)}_{\alpha},p^{(\eta)}_{\alpha + 1},p^{(\eta)}_{\alpha + 2},\cdots,p^{(\eta)}_{\alpha + \beta}] \nonumber \\ 
& -[p^{(\eta)}_{\alpha + \Delta},p^{(\eta)}_{\alpha + \Delta + 1},p^{(\eta)}_{\alpha + \Delta + 2},\cdots,p^{(\eta)}_{\alpha + \Delta + \beta}\big] \notag \\
=& \big[p^{(\eta)}_{\alpha} - p^{(\eta)}_{\alpha + \Delta},p^{(\eta)}_{\alpha + 1} - p^{(\eta)}_{\alpha + \Delta + 1},p^{(\eta)}_{\alpha + 2} - p^{(\eta)}_{\alpha + \Delta + 2},\nonumber \\ 
& \cdots,p^{(\eta)}_{\alpha + \beta} - p^{(\eta)}_{\alpha + \Delta + \beta}\big] \notag\\
=&\eta\big[p^{(1)}_{\alpha} - p^{(1)}_{\alpha + \Delta}, p^{(1)}_{\alpha + 1} - p^{(1)}_{\alpha + \Delta + 1},p^{(1)}_{\alpha + 2} - p^{(1)}_{\alpha + \Delta + 2},\nonumber \\ 
& \cdots, p^{(1)}_{\alpha + \beta} - p^{(1)}_{\alpha + \Delta + \beta}] \notag\\
= & -\eta\Delta[1,1,1,\cdots,1].
\end{align}
\end{property}

Building upon the definitions and properties outlined earlier, we now introduce an important vector, denoted by $\vec{f}^{\,\sigma}_{m_i}$, constructed from components of $\vec{f}$. This vector emerges as a cornerstone of our analysis, serving as the basis for deriving key properties that will be extensively used in our proof. Alongside $\vec{f}^{\,\sigma}_{m_i}$, we define a scalar quantity, $F^{\sigma}_i$, designed to facilitate subsequent calculations. 

\begin{definition}
\label{f_partial}
Let $i$ and $\sigma$ be two integers such that $1 \leq i \leq h_j$ and $1 \leq \sigma \leq l_{m_i}$, where $l_{m_i} = \frac{d^2}{2T_{m_i}}$. We define $\vec{f}^{\,\sigma}_{m_i}$ as the vector formed by the $\sigma$-th $2T_{m_i}$ elements of $\vec{f}$. 
\end{definition}

\begin{definition}
\label{F_scalar}
We use the recursive relation (\ref{recursive}) and definition \ref{f_partial} to define the scalar $F^{\sigma}_i = [(R_{m_i})(-R_{m_i})] \times [\vec{f}^{\,\sigma}_{m_i}]^{T}$. 
\end{definition}

\begin{property}
If we use definitions \ref{f_partial} and \ref{F_scalar} with $i = 1$, then
\begin{align}
& a_j = [w_{(j,.)}] \times [\vec{f}]^{T} \notag \\
&= [R_{m_0}] \times [\vec{f}]^{T} \notag \\
&= [(R_{m_1})(-R_{m_1})(R_{m_1})(-R_{m_1})\cdots(R_{m_1})(-R_{m_1})] \nonumber \\ 
& \times \bigl[\vec{f}^{\,1}_{m_1},\vec{f}^{\,2}_{m_1},\vec{f}^{\,3}_{m_1},\cdots,\vec{f}^{\,\sigma}_{m_1},\cdots,\vec{f}^{\,l_{m_1}}_{m_1}\bigr]^{T} \notag\\
&= \sum_{\tau = 1}^{l_{m_1}}[(R_{m_1})(-R_{m_1})] \times [\vec{f}^{\,\tau}_{m_1}]^{T} \notag \\
\label{F^tau_i}
&= \sum_{\tau = 1}^{l_{m_1}}F^{\tau}_1.
\end{align}
\end{property}

The following two properties form the foundation of our strategy to distinguish which Walsh angles $a_j$ are null and which are not. The technique we will use involves determining whether the composing powers $l_{m_i}$ of $j$ possess corresponding periods $T_{m_i} \geq d$ or $T_{m_i} \leq d/2$. Depending on this categorization, we will then employ either extended or partial vector notation.


\begin{property}
\label{Rxf_extended}
Let $T_{m_i} = \mu_i d$, for $\mu_i \geq 1$ a power of $2$. Then, $\vec{f}^{\,\tau}_{m_i} = [\vec{P}_{\{(2\tau-2)\mu_i + 1|(2\tau-1)\mu_i\}},\vec{P}_{\{(2\tau-1)\mu_i + 1|2\tau\mu_i\}}]$ and after using definition \ref{F_scalar} and property \ref{extended_P_diff}:
\begin{align}
F^{\tau}_i =& [(R_{m_i})(-R_{m_i})] \nonumber \\
&\times [{\vec{P}_{\{(2\tau-2)\mu_i + 1|(2\tau-1)\mu_i\}},\vec{P}_{\{(2\tau-1)\mu_i + 1|2\tau\mu_i\}}}]^{T} \notag \\
=& [R_{m_i}] \times [\vec{P}_{\{(2\tau-2)\mu_i + 1|(2\tau-1)\mu_i\}}]^{T} \nonumber \\
&+ [-R_{m_i}] \times [\vec{P}_{\{(2\tau-1)\mu_i + 1|2\tau\mu_i\}}]^{T} \notag \\
=& [R_{m_i}] \nonumber \\
&\times [\vec{P}_{\{(2\tau-2)\mu_i + 1|(2\tau-1)\mu_i\}} - \vec{P}_{\{(2\tau-1)\mu_i + 1|2\tau\mu_i\}}]^{T} \notag\\
\label{FRP}
=& [R_{m_i}] \times [-\mu_i (\vec{P}_1,\vec{P}_1,\vec{P}_1,\cdots,\vec{P}_1)]^{T}.
\end{align}
Thus, by defining $F^{\tau}_1 = F_1$, we have the following property for expression (\ref{F^tau_i}):
\begin{equation}
a_j = l_{m_1}F_1.    
\end{equation}
\end{property}


\begin{property}
\label{Rxf_partial}
Let $T_{m_i} = d/\nu_i$, for $\nu_i \geq 2$ a power of $2$. Then, $\vec{f}^{\,\tau}_{m_i} = [\vec{p}^{\,(\eta)}_{\{(2\tau-2)d/\nu_i + 1|(2\tau-1)d/\nu_i\}},\vec{p}^{\,(\eta)}_{\{(2\tau-1)d/\nu_i + 1|2\tau d/\nu_i\}}]$. After using definition \ref{F_scalar} and property \ref{p_diff}, we obtain
\begin{align}
& F^{\tau}_i = [(R_{m_i})(-R_{m_i})] \nonumber \\
&\times [\vec{p}^{\,(\eta)}_{\{(2\tau-2)d/\nu_i + 1|(2\tau-1)d/\nu_i\}},\vec{p}^{\,(\eta)}_{\{(2\tau-1)d/\nu_i + 1|2\tau d/\nu_i\}}]^{T} \notag \\
&= [R_{m_i}] \times [\vec{p}^{\,(\eta)}_{\{(2\tau-2)d/\nu_i + 1|(2\tau-1)d/\nu_i\}}]^{T} \nonumber \\
&+ [-R_{m_i}] \times [\vec{p}^{\,(\eta)}_{\{(2\tau-1)d/\nu_i + 1|2\tau d/\nu_i\}}]^{T} \notag \\
&= [R_{m_i}] \nonumber \\
&\times [\vec{p}^{\,(\eta)}_{\{(2\tau-2)d/\nu_i + 1|(2\tau-1)d/\nu_i\}} -\vec{p}^{\,(\eta)}_{\{(2\tau-1)d/\nu_i + 1|2\tau d/\nu_i\}}]^{T} \notag\\
\label{FRp}
&= [R_{m_i}] \times [-\frac{\eta d}{\nu_i}(1,1,1,\cdots,1)]^{T}.
\end{align}
We know that $1 \leq \eta \leq d$ and within a fixed $\eta$, there is $\frac{d}{T_{m_1}}$ partial vectors of the form $\vec{p}^{\,(\eta)}_{\{\alpha|\alpha + T_{m_1} - 1\}}$. Then, we define $F^{\tau}_1 = F^{(\eta)}_1$ and replace the sum of Eq. (\ref{F^tau_i}) on $\tau$ with the sum on $\eta$ and a multiplication by the term $\frac{d}{2T_{m_1}} = \frac{\nu_1}{2}$, to obtain the following property:
\begin{equation}
a_j = \frac{\nu_1}{2}\sum_{\eta = 1}^{d}F^{(\eta)}_1.
\end{equation} 
\end{property}

\begin{proof}
Having stated these definitions and properties, now we go to the proof of the theorem. We shall prove separately the formulas of $a_j(t)$ for each Hamming weight $h_j = 1$, $h_j = 2$ and $h_j \geq 3$.

\textbf{1. Case \(h_j = 1\).}
In this case, we have $j = l_{m_1} = 2^{m_1 - 1}$ and $R_{m_1} = (+1)(\times T_{m_1})$. \\

\textbf{1.1. Sub-case \(l_{m_1} \leq d/2\).}
Here, we have $T_{m_1} = \mu_{1} d$, for $\mu_{1} \geq 1$ a power of $2$. Firstly, we calculate the scalar $F_1$ using Eq. (\ref{FRP}):
\begin{align}
    F_1 &= [R_{m_1}] \times [-\mu_1(\vec{P}_1,\vec{P}_1,\vec{P}_1,\cdots,\vec{P}_1)]^{T} \notag \\
    &= [(+1)(\times T_{m_1})] \times [-\mu_1(\vec{P}_1,\vec{P}_1,\vec{P}_1,\cdots,\vec{P}_1)]^{T} \notag \\
    &= -\mu_1 \bigl(P_1 + P_1 + P_1 + \cdots + P_1 \bigr) \notag \\
    &= -\mu_1^{2}P_1,
\end{align}
where we have defined $P_1$ as the sum of all the elements of $\vec{P}_1$, that is, $P_1 = 1 + 2 + 3 + \cdots + d$. Because $P_1  = \frac{(1+d)d}{2}$, then
\begin{equation}
    F_1 = -\frac{\mu_1^{2}(1+d)d}{2}.
\end{equation}

Now, since $T_{m_1} = \mu_1 d$ and by definition $T_{m_1} = d^2/2l_{m_1}$, with $l_{m_1} = j$, we can write $\mu_1 = d/2j$. If we use property \ref{Rxf_extended} to calculate $a_j$, then
\begin{align}
    a_j &= l_{m_1}F_1 \notag \\
    &= j\bigl(-\frac{\mu_1^{2}(1+d)d}{2}\bigr) \notag \\
    &= \frac{-(1+d)d^3}{8j}.
\end{align} 
To obtain the relevant Walsh angle $a_j(t)$, we make use of Eq. (\ref{aj(t)}). Thus
\begin{equation}
    a_j(t) = \frac{d(1+d)\omega t}{8j}.
\end{equation}
\\

\textbf{1.2. Subcase \(l_{m_1} \geq d\).}
For this subcase, we have $T_{m_1} = d/\nu_1$ for $\nu_1 \geq 2$ a power of $2$. The scalar $F^{(\eta)}_1 = F^{\tau}_1$ is calculated by the expression (\ref{FRp}):
\begin{align}
F^{(\eta)}_1 &= [R_{m_1}] \times [-\frac{\eta d}{\nu_1}(1,1,1,\cdots,1)]^{T} \notag \\
&= [(+1)(\times T_{m_1})] \times [-\frac{\eta d}{\nu_1}(1,1,1,\cdots,1)]^{T} \notag \\
&= \frac{-\eta d}{\nu_1}(1 + 1 + 1 + \cdots + 1) \notag \\
&= \frac{-\eta d^2}{\nu_1^{2}}.
\end{align}

Using that $T_{m_1} = d/\nu_1 = d^2/2l_{m_1}$, with $l_{m_1} = j$, we write $\nu_1 = 2j/d$. Then, using property \ref{Rxf_partial} to calculate the Walsh angles $a_j$:
\begin{align}
a_j &= \frac{\nu_1}{2}\sum_{\eta = 1}^{d}F^{(\eta)}_1 \notag \\
&= \frac{-\nu_1 d^2}{2\nu_1^{2}}\sum_{\eta = 1}^{d}\eta \notag \\
&= \frac{-(1+d)d^4}{8j}.
\end{align}
Therefore, by Eq. (\ref{aj(t)}):
\begin{equation}
    a_j(t) = \frac{d^2(1+d)\omega t}{8j}.
\end{equation}
Thus, for $h_j = 1$, we always have
\begin{equation}
    a_j(t) \neq 0.
\end{equation} \\

\textbf{2. Case \(h_j = 2\).}
In this case, we have $j = l_{m_1} + l_{m_2}$. We use the recursive relation of Eq. (\ref{recursive}) to write
\begin{equation}
    R_{m_1} = \bigl[(R_{m_2})(-R_{m_2})(R_{m_2})(-R_{m_2})\cdots(R_{m_2})(-R_{m_2})\bigr],
\end{equation}
with $R_{m_2} = (+1)(\times T_{m_2})$.\\

\textbf{2.1. Sub-case \(l_{m_1} < d/2, l_{m_2} \leq d/2\).}
Here, we have $T_{m_1} = \mu_1 d$ and $T_{m_2} = \mu_2 d$, where both $\mu_1 \geq 2$ and $\mu_2 \geq 1$ are powers of $2$. Firstly, we calculate the scalar $F_1$ using Eq. (\ref{FRP}):
\begin{align}
    F_1 =& [R_{m_1}] \times [-\mu_1(\vec{P}_1,\vec{P}_1,\vec{P}_1,\cdots,\vec{P}_1)]^{T} \notag \\
    =& \bigl[(R_{m_2})(-R_{m_2})(R_{m_2})(-R_{m_2})\cdots(R_{m_2})(-R_{m_2})\bigr] \nonumber \\ 
    & \times [-\mu_1(\vec{P}_1,\vec{P}_1,\vec{P}_1,\cdots,\vec{P}_1)]^{T} \notag \\
    =& \frac{-\mu_1 T_{m_1}}{2T_{m_2}}\bigl([R_{m_2}] \times (\vec{P}_1,\vec{P}_1,\cdots,\vec{P}_1)^{T} - [R_{m_2}] \nonumber \\ 
    & \times (\vec{P}_1,\vec{P}_1,\cdots,\vec{P}_1)^{T}\bigr) \notag \\
    &= 0,
\end{align}
where the factor $T_{m_1}/2T_{m_2}$ was introduced because we collapsed the sum of all the $T_{m_1}/2T_{m_2}$ products $[(R_{m_2})(-R_{m_2})] \times [\vec{P}_1,\vec{P}_1,\cdots,\vec{P}_1]^{T}$ into just a single product. Then, using property \ref{Rxf_extended} for $a_j$, we obtain
\begin{align}
    a_j &= l_{m_1}F_1 \notag \\
    &= 0,
\end{align}
and by Eq. (\ref{aj(t)}):
\begin{equation}
    a_j(t) = 0.
\end{equation} \\

\textbf{2.2. Sub-case \(l_{m_1} \geq d, l_{m_2} > d\).}
In this sub-case, we have $T_{m_1} =  d/\nu_1$ and $T_{m_2} = d/\nu_2$, where both $\nu_1 \geq 2$ and $\nu_2 \geq 4$ are powers of $2$. Now, we calculate the scalar $F^{(\eta)}_1$ using Eq. (\ref{FRp}):
\begin{align}
    F^{(\eta)}_1 =& [R_{m_1}] \times [\frac{-\eta d}{\nu_1}(1,1,1,\cdots,1)]^{T} \notag \\
    =& \bigl[(R_{m_2})(-R_{m_2})(R_{m_2})(-R_{m_2})\cdots(R_{m_2})(-R_{m_2})\bigr] \nonumber \\ 
    & \times [\frac{-\eta d}{\nu_1}(1,1,1,\cdots,1)]^{T} \notag \\
    =& \frac{-\eta d T_{m_1}}{2 \nu_1 T_{m_2}}\bigl([R_{m_2}] \times (1,1,\cdots,1)^{T} - [R_{m_2}] \nonumber \\ 
    & \times (1,1,\cdots,1)^{T}\bigr) \notag \\
    &= 0,
\end{align}
for any $1 \leq \eta \leq d$. From property \ref{Rxf_partial} for $a_j$, we conclude that
\begin{align}
    a_j &= \frac{\nu_1}{2}\sum_{\eta = 1}^{d}F^{(\eta)}_1 \notag \\
    &= 0.
\end{align}
Thus, by Eq. (\ref{aj(t)}), we must also have
\begin{equation}
    a_j(t) = 0.
\end{equation} 

\textbf{2.3. Sub-case \(l_{m_1} \leq d/2, l_{m_2} \geq d\).} For this final sub-case, we have $T_{m_1} = \mu_1 d$ and $T_{m_2} = d/\nu_2$, where $\mu_1 \geq 1$ and $\nu_2 \geq 2$ are powers of $2$. We start by calculating the scalar $F_1$ using Eq. (\ref{FRP}):
\begin{align}
\label{hj2_F1}
    F_1 =& [R_{m_1}] \times [-\mu_1(\vec{P}_1,\vec{P}_1,\vec{P}_1,\cdots,\vec{P}_1)]^{T} \notag \\
    =& \bigl[(R_{m_2})(-R_{m_2})(R_{m_2})(-R_{m_2})\cdots(R_{m_2})(-R_{m_2})\bigr] \nonumber \\ 
    & \times [-\mu_1(\vec{P}_1,\vec{P}_1,\vec{P}_1,\cdots,\vec{P}_1)]^{T}.
\end{align}
However, we can no longer calculate products of the form $[R_{m_2}] \times (\vec{P}_1,\vec{P}_1,\cdots,\vec{P}_1)^{T}$, because now $R_{m_2}$ has length $T_{m_2} = d/\nu_2$, which is less than the length of $\vec{P}_1$. To be able to calculate the right side of expression (\ref{hj2_F1}), we break each $\vec{P}_1$ into smaller parts of length $T_{m_2}$, using partial vectors $\vec{p}^{\,(\eta)}_{\{\alpha|\alpha + T_{m_2} - 1\}}$. Firstly, we notice that $\vec{P}_1$ corresponds to partial vectors with $\eta = 1$. Secondly, since $\vec{P}_1$ has length $d$, we can fit $d/T_{m_2} = \nu_2$ partial vectors $\vec{p}^{\,(\eta)}_{\{\alpha|\alpha + T_{m_2} - 1\}}$ into $\vec{P}_1$. That is, considering Eq. (\ref{FRp}), the $\nu_2/2$ products that we have to calculate are related to $F^{\tau}_2 = F^{(1)}_2$ by the expression 
\begin{align}
\label{mu1F2}
& -\mu_1 F^{(1)}_2 = [(R_{m_2})(-R_{m_2})] \times \nonumber \\ 
& \big[-\mu_1 (\vec{p}^{\,(1)}_{\{(2\tau-2)d/\nu_2 + 1|(2\tau-1)d/\nu_2\}},\vec{p}^{\,(1)}_{\{(2\tau-1)d/\nu_2 + 1|2\tau d/\nu_2\}})\big]^{T} \notag \\
&=  [R_{m_2}] \times [\frac{\mu_1 d}{\nu_2}(1,1,\cdots,1)]^{T}.
\end{align}

After calculating the product of Eq. (\ref{mu1F2}) and multiplying it by the number $\nu_2/2$, we should then multiply the result by the number of elementary vectors $\vec{P}_1$ appearing in Eq. (\ref{hj2_F1}), which is $T_{m_1}/d = \mu_1$. Therefore
\begin{align}
    F_1 &= (\mu_1)(\frac{\nu_2}{2}) (-\mu_1 F^{(1)}_2) \notag \\
    &= [R_{m_2}] \times [\frac{\mu_1^{2} d}{2}(1,1,\cdots,1)]^{T} \notag \\
    &= [(+1)(\times T_{m_2})] \times [\frac{\mu_1^{2} d}{2}(1,1,\cdots,1)]^{T} \notag \\
    &= \frac{\mu_1^{2} d}{2}(1 + 1 + \cdots + 1) \notag \\
    &= \frac{\mu_1^{2} d^2}{2\nu_2}.
\end{align}

With $F_1$ calculated, $a_j$ will be given by property \ref{Rxf_extended}. To simplify the final result for $a_j$, we will use $T_{m_1} = \mu_1 d = d^2/2l_{m_1}$ and $T_{m_2} = d/\nu_2 = d^2/2l_{m_2}$ to write $\mu_1 = d/2l_{m_1}$ and $\nu_2 = 2l_{m_2}/d$. Therefore, by property \ref{Rxf_extended} we have that
\begin{align}
    a_j &= l_{m_1}F_1 \notag \\
    &= l_{m_1}\bigl(\frac{\mu_1^{2} d^2}{2\nu_2}\bigr) \notag \\
    &= \frac{d^5}{16l_{m_1}l_{m_2}}.
\end{align}
Again, by making use of Eq. (\ref{aj(t)}), we obtain
\begin{equation}
    a_j(t) = \frac{-d^3\omega t}{16l_{m_1}l_{m_2}}.
\end{equation}
Thus, for $h_j = 2$, if $l_{m_1} \leq d/2$ and $l_{m_2} \geq d$, we necessarily have 
\begin{equation}
a_j(t) \neq 0,    
\end{equation}
otherwise $a_j(t) = 0$. 

\textbf{3. Case \(h_j \geq 3\).}
For this case, we have in general $j = l_{m_1} + l_{m_2} + \sum_{i = 3}^{h_j}l_{m_i}$. 

\textbf{3.1. Sub-case \(l_{m_1} < d/2, l_{m_2} \leq d/2\).}
Here, we have a similar situation to sub-case $2.1$: $T_{m_1} = \mu_1 d$ and $T_{m_2} = \mu_2 d$. However, we also have other powers $l_{m_i}$ with respective periods $T_{m_i} = d^2/2l_{m_i}$ for $i \geq 3$. We will show that for any such $T_{m_i}$, it must be true that $a_j = 0$. We recall the recursive relation (\ref{recursive}) and use Eq. (\ref{FRP}) to obtain
\begin{align}
& F_1 = [R_{m_1}] \times [-\mu_1(\vec{P}_1,\vec{P}_1,\vec{P}_1,\cdots,\vec{P}_1)]^{T} \notag \\
    &= \bigl[(R_{m_2})(-R_{m_2})(R_{m_2})(-R_{m_2})\cdots(R_{m_2})(-R_{m_2})\bigr] \nonumber \\ 
    & \times [-\mu_1(\vec{P}_1,\vec{P}_1,\vec{P}_1,\cdots,\vec{P}_1)]^{T} \notag \\
    &= \frac{-\mu_1 T_{m_1}}{2T_{m_2}}\bigl([R_{m_2}] \times (\vec{P}_1,\vec{P}_1,\cdots,\vec{P}_1)^{T} - [R_{m_2}] \nonumber \\ 
    & \times (\vec{P}_1,\vec{P}_1,\cdots,\vec{P}_1)^{T}\bigr) \notag \\
    &= 0.
\end{align}
In the scenario of sub-case $2.1$, we have $R_{m_2} = (+1)(\times T_{m_2})$ and $F_1 = 0$. Although for the present sub-case we have $R_{m_2} \neq (+1)(\times T_{m_2})$, the scalar $F_1$ is again identically null, just like in sub-case $2.1$. Note that we have not imposed any restrictions on $l_{m_i}$, that is, $F_1 = 0$ regardless of whether $l_{m_i} \leq d/2$ or $l_{m_i} \geq d$. Now, from property \ref{Rxf_extended} we have $a_j = l_{m_1}F_1$. Thus, if there are any others $l_{m_i}$ such that $j = l_{m_1} + l_{m_2} + \sum_{i = 3}^{h_j}l_{m_i}$, it holds that
\begin{equation}
    a_j = 0.
\end{equation} 
Therefore, by Eq. (\ref{aj(t)}), it follows that:
\begin{equation}
    a_j(t) = 0.
\end{equation}

\textbf{3.2. Sub-case \(l_{m_1} \geq d, l_{m_2} > d, l_{m_i} > d\).} This situation is an extension of sub-case $2.2$: $T_{m_1} = d/\nu_1$ and $T_{m_2} = d/\nu_2$. Again, we have powers $l_{m_i}$ with respective periods $T_{m_i}$, with $i \geq 3$. By the recursive relation (\ref{recursive}) and Eq. (\ref{FRp}), we obtain
\begin{align}
   & F^{(\eta)}_1 = [R_{m_1}] \times [\frac{-\eta d}{\nu_1}(1,1,1,\cdots,1)]^{T} \notag \\
    &= \bigl[(R_{m_2})(-R_{m_2})(R_{m_2})(-R_{m_2})\cdots(R_{m_2})(-R_{m_2})\bigr] \nonumber \\ 
    & \times [\frac{-\eta d}{\nu_1}(1,1,1,\cdots,1)]^{T} \notag \\
    &= \frac{-\eta d T_{m_1}}{2 \nu_1 T_{m_2}}\bigl([R_{m_2}] \times (1,1,\cdots,1)^{T} - [R_{m_2}] \nonumber \\ 
    & \times (1,1,\cdots,1)^{T}\bigr) \notag \\
    &= 0,
\end{align}
for any $1 \leq \eta \leq d$. Similarly to what happened for $F_1$ in sub-case $3.1$, here the result of $F^{(\eta)}_1$ is independent of the form of $R_{m_2}$. This is relevant because in the present sub-case, $R_{m_2}$ is obtained by others $R_{m_i}$ that we did not specify. Now, from property \ref{Rxf_partial} we have $a_j = \frac{\nu_1}{2}\sum_{\eta = 1}^{d}F^{(\eta)}_1$. Then, it must be true that
\begin{equation}
    a_j = 0,
\end{equation}
and by Eq. (\ref{aj(t)}) it follows that
\begin{equation}
    a_j(t) = 0.
\end{equation} 

\textbf{3.3. Sub-case \(l_{m_1} \leq d/2, l_{m_2} \geq d, l_{m_i} > d\).}
In sub-case $2.3$, we have shown that $a_j \neq 0$. Here, since there are other powers $l_{m_i}$, the situation, however, will turn out to be different. We should repeat the calculations of sub-case $2.3$ by first using Eq. (\ref{mu1F2}):
\begin{align}
    -\mu_1 F^{(1)}_2 = [R_{m_2}] \times [\frac{\mu_1 d}{\nu_2}(1,1,\cdots,1)]^{T}. 
\end{align}
We must calculate $F_1$, since $T_{m_1} \geq d$. From the calculations of sub-case $2.3$, $F_1$ is given by $F_1 = (\mu_1)(\frac{\nu_2}{2})(-\mu_1 F^{(1)}_2)$. Thus
\begin{align}
    & F_1 = (\mu_1)(\frac{\nu_2}{2}) (-\mu_1 F^{(1)}_2) \notag \\
    &= [R_{m_2}] \times [\frac{\mu_1^{2} d}{2}(1,1,\cdots,1)]^{T} \notag \\
    &= \bigl[(R_{m_3})(-R_{m_3})(R_{m_3})(-R_{m_3})\cdots(R_{m_3})(-R_{m_3})\bigr] \nonumber \\ 
    & \times [\frac{\mu_1^{2} d}{2}(1,1,\cdots,1)]^{T} \notag \\
    &= \frac{\mu_1^{2}dT_{m_2}}{4T_{m_3}}\bigl([R_{m_3}] \times (1,\cdots,1)^{T} - [R_{m_3}] \times (1,\cdots,1)^{T}\bigr) \notag \\
    &= 0.
\end{align}
From property \ref{Rxf_extended}, the Walsh angles for this sub-case are given by $a_j = l_{m_1}F_1$. Therefore
\begin{equation}
    a_j = 0.
\end{equation}
And from Eq. (\ref{aj(t)}), we get
\begin{equation}
    a_j(t) = 0.
\end{equation}
Just like it happened for sub-case $2.2$, where we had $l_{m_1} \geq d$ and $l_{m_2} > d$, the Walsh angles $a_j(t)$ are also null here. This is true for any $l_{m_{i \geq 3}}$ that composes $j$. The conclusion is that for any $j$ with $h_j \geq 3$, we have $a_j(t) = 0$. 

We know that the only non-zero Walsh angles $a_j(t)$ are those with Hamming weight $h_j = 1$ or $h_j = 2$, that is, cases $1$ and $2$. In case $1$, it is always true that $a_j(t) \neq 0$. In case $2$, the Walsh angles are $a_j(t) \neq 0$ if and only if we simultaneously have $l_{m_1} \leq d/2$ and $l_{m_2} \geq d$. Therefore, if $W$ is the set of non-zero Walsh angles, then it is the union of two subsets $W_1$ and $W_2$, composed, respectively, by the non-zero Walsh angles with $h_j = 1$ and $h_j = 2$. We can summarize this as
\begin{equation}
W = W_1 \cup W_2,
\end{equation}
with
\begin{equation}
W_1 = \left\{ a_j(t) = 
\begin{cases} 
\frac{d(1+d)\omega t}{8j} & \text{if $j \leq d/2$}, \\
\frac{d^2(1+d)\omega t}{8j} & \text{if $j \geq d$},
\end{cases}
\Big| h_j = 1 \right\},
\end{equation}
and
\begin{align}
W_2 = & \Big\{ a_j(t) = \frac{-d^3\omega t}{16l_{m_1}l_{m_2}} \Big| h_j = 2 \text{ and } j = l_{m_1} + l_{m_2}, \nonumber \\ 
& \text{where } l_{m_1} \leq d/2 \text{ and } l_{m_2} \geq d \Big\}.
\end{align}
\end{proof}
\begin{lemma}
\label{lemma}
Let $G_2(q)$ be the number of gates necessary to implement the $q$-qubit unitary gate $\hat{U}(t)$ exactly. If we use only $Z$-rotations and controlled-NOT gates, then $G_2(q) = \frac{3}{4}q^2 + q$.
\end{lemma}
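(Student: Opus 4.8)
The plan is to prove the lemma as a direct counting argument built on Theorem \ref{theorem2} together with the gate-level construction of $e^{ia_j(t)\hat{w}_j}$ from Appendix \ref{sec:appendixB}. The total cost $G_2(q)$ is the sum, over all non-zero Walsh angles $a_j(t)$, of the number of gates needed to realize each factor $e^{ia_j(t)\hat{w}_j}$ in the product (\ref{U_walsh}). Since Theorem \ref{theorem2} guarantees that the only surviving angles are those in $W = W_1 \cup W_2$, the argument cleanly splits into (i) counting $|W_1|$ and $|W_2|$, and (ii) assigning a per-factor gate cost according to the Hamming weight $h_j$. Throughout I take $q$ even, consistent with Theorem \ref{theorem2}, so that $d = 2^{q/2}$ with $d^2 = 2^q$.

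First I would count the cardinalities. The elements of $W_1$ correspond to single-bit indices $j = l_r = 2^{r-1}$ with $1 \leq r \leq q$; the condition $j \leq d/2$ selects $r \leq q/2$ and the condition $j \geq d$ selects $r \geq q/2 + 1$, and since no power of two lies strictly between $d/2$ and $d$, every one of the $q$ single-bit indices contributes, giving $|W_1| = q$. For $W_2$ the indices satisfy $h_j = 2$ with $j = l_{m_1} + l_{m_2}$, and the non-vanishing condition $l_{m_1} \leq d/2$, $l_{m_2} \geq d$ translates into $m_1 \in \{1,\dots,q/2\}$ and $m_2 \in \{q/2+1,\dots,q\}$. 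These two ranges are disjoint (which automatically enforces $m_1 < m_2$), so the number of admissible pairs is $(q/2)^2 = q^2/4$, giving $|W_2| = q^2/4$.

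Next I would read off the per-factor gate cost from the identity (\ref{exp_Z}) and the staircase $\hat{A}_{h_j} = \widehat{CNOT}_{h_j}^{1}\cdots\widehat{CNOT}_{h_j}^{h_j-1}$: each factor $e^{ia_j(t)\hat{w}_j}$ is implemented by a single $Z$-rotation on the MSB qubit together with the $h_j - 1$ controlled-NOTs of $\hat{A}_{h_j}$ and the $h_j - 1$ of $\hat{A}_{h_j}^{-1}$, for a total of $2h_j - 1$ gates. The SWAP operations discussed in Appendix \ref{sec:appendixB} enter only as a conceptual relabelling of qubit positions and are not counted, consistent with the lemma's restriction to $Z$-rotations and controlled-NOT gates. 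Hence each $W_1$ factor ($h_j = 1$) costs one rotation and no CNOTs, while each $W_2$ factor ($h_j = 2$) costs one rotation and two CNOTs, i.e.\ three gates.

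Summing then yields $G_2(q) = |W_1|\cdot 1 + |W_2|\cdot 3 = q + 3\cdot\frac{q^2}{4} = \frac{3}{4}q^2 + q$, as claimed. There is no genuinely hard step once Theorem \ref{theorem2} is in hand; the argument is essentially combinatorial. The only points that demand care are the faithful translation of the power-of-two conditions $l_{m_i} \leq d/2$ and $l_{m_i} \geq d$ into the exponent ranges for $m_1$ and $m_2$ (so that the disjointness and the count $q^2/4$ are correct), and the bookkeeping that a weight-two exponential requires exactly two CNOTs rather than one, so that the coefficient of $q^2$ comes out to $3/4$ rather than something smaller.
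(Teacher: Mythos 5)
Your proof is correct and follows essentially the same route as the paper's: both invoke Theorem \ref{theorem2} to restrict to the angles in $W_1\cup W_2$, count $q$ indices with $h_j=1$ and $(q/2)^2=q^2/4$ admissible pairs with $h_j=2$, assign each factor $e^{ia_j(t)\hat{w}_j}$ a cost of $2h_j-1$ gates via the construction of Appendix \ref{sec:appendixB}, and sum to get $\frac{3}{4}q^2+q$. Your explicit remarks that no power of two lies strictly between $d/2$ and $d$ and that the SWAP relabelling is purely conceptual are slightly more careful than the paper's wording, but they do not change the argument.
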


\begin{proof}
As it is established in Appendix \ref{sec:appendixB}, we can calculate the exponential operators $e^{ia_j(t) \hat{w}_j}$ in Eq. (\ref{walsh_unitary}) by applying a $Z$-rotation on qubit $q_{m_{h_j}}$, where $m_{h_j}$ is the MSB of $j$, and two controlled-NOT gates targeted on $q_{m_{h_j}}$ for each controlling qubit. That is, the number of gates for a single $e^{ia_j(t) \hat{w}_j}$ is given by one $Z$-rotation and $2(h_j - 1)$ controlled-NOT gates, resulting in $2h_j - 1$ gates. Now, from Theorem \ref{theorem2}, the only non-zero Walsh angles $a_j(t)$ are those for which we have $h_j = 1$ or $h_j = 2$. 
    
For $h_j = 1$ it is always true that $a_j(t) \neq 0$ and the respective values of $j$ correspond to powers of $2$. We know that within $1 \leq j \leq 2^q - 1$, there are $q$ powers of $2$. Then, the total number of gates $G^{(1)}_2(q)$ necessary for $h_j = 1$ is $G^{(1)}_2(q) = (2h_j - 1)q$. Thus
    \begin{equation}
     G^{(1)}_2(q) = q.
    \end{equation}
    
For $h_j = 2$, we have $a_j(t) \neq 0$ if and only if $j = l_{m_1} + l_{m_2}$ with $l_{m_1} \leq d/2$ and $l_{m_2} \geq d$. To find how many gates $G^{(2)}_2(q)$ are needed here, we must count how many combinations of $l_{m_1}$ and $l_{m_2}$ are possible. Firstly, we notice that $d/2 = 2^{\frac{q}{2}}/2 = 2^{(\frac{q}{2} - 1)}$. If we pick $l_{m_1}$ as satisfying $l_{m_1} = d/2$, then $d/2 = l_{m_1} = 2^{(m_1 - 1)}$ implies that $m_1 = \frac{q}{2}$. That is, there are $\frac{q}{2}$ possible values for $l_{m_1} \leq d/2$. Because there is a total of $q$ powers of $2$ in the interval $1 \leq j \leq 2^q - 1$, then there is also $q - \frac{q}{2} = \frac{q}{2}$ possible values for $l_{m_2}$. We conclude that the number of combinations of $l_{m_1}$ and $l_{m_2}$ is $\bigl(\frac{q}{2}\bigr)\bigl(\frac{q}{2}\bigr) = \frac{q^2}{4}$. Then, the number $G^{(2)}_2(q)$ of gates necessary for $h_j = 2$ is $G^{(2)}_2 = (2h_j - 1)\frac{q^2}{4}$. Thus
    \begin{equation}
        G^{(2)}_2 = \frac{3}{4}q^2.
    \end{equation}
    
    Finally, the total gate cost $G_2(q) = G^{(1)}_2(q) + G^{(2)}_2(q)$ for the implementation of the unitary $\hat{U}(t)$ is
    \begin{equation}
        G_2(q) = \frac{3}{4}q^2 + q.
    \end{equation}
\end{proof}

\section{Purity estimation using a variation of the SWAP test}
\label{sec:appendixE}

In Ref. \cite{ekert}, the authors explore an interferometric setup to extract $\operatorname{Tr}(\hat{U}\hat{\rho})$ based on its correlation with the visibility $v$, where $\hat{U}$ represents a unitary gate and $\hat{\rho}$ denotes the density operator of the system. Their investigation draws parallels between this quantum circuit and the one employed in the SWAP test. However, there are notable distinctions: they utilize a controlled-$\hat{U}$ gate instead of a controlled-SWAP gate and consider density operators $\hat{\rho}$ instead of pure states $|\psi\rangle$. 

The authors further contend that by selecting $\hat{U}$ as the SWAP gate and letting $\hat{\rho} = \hat{\rho}_{S_1} \otimes \hat{\rho}_{S_2}$ represent the joint density operator of two subsystems $S_1$ and $S_2$, a specific scenario arises:
\begin{equation} \label{TrS1S2}     \operatorname{Tr}(\hat{\rho}_{S_1} \hat{\rho}_{S_2}) = 2P_0 - 1, \end{equation}
where $P_0$ signifies the probability of measuring the state $|0\rangle$ for the ancilla qubit subsequent to the application of the unitary gates forming the quantum circuit.

In this Appendix, we present an operational proof for Eq. (\ref{TrS1S2}). Particularly, when $\hat{\rho}_{S_1} = \hat{\rho}_{S_2}$, it yields the purity $\operatorname{Tr}(\hat{\rho}^2_{S_1}) = 2P_0 - 1$, a key quantity for our quantum algorithm. Before delving into the proof, we first introduce an identity pertaining to the SWAP gate.

\begin{proposition}
Let $\hat{V}_1$ and $\hat{V}_2$ be two linear operators acting, respectively, on $d^2$-dimensional Hilbert spaces $H_1$ and $H_2$, with $H_1 = H_2$. Then, the following identity for the $\widehat{SWAP}$ gate holds:
\begin{equation}
\label{swap_id}
    \operatorname{Tr}\bigl((\hat{V}_1 \otimes \hat{V}_2)\widehat{SWAP}\bigr) = \operatorname{Tr}(\hat{V}_1\hat{V}_2).
\end{equation}
\end{proposition}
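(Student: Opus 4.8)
The plan is to prove the identity by writing the $\widehat{SWAP}$ gate explicitly in a fixed orthonormal basis and then exploiting the fact that the trace over a tensor product factorizes on product operators. First I would fix an orthonormal basis $\{|e_a\rangle\}_{a=1}^{d^2}$ of the common space $H_1 = H_2$ and recall the defining action $\widehat{SWAP}\,(|e_a\rangle \otimes |e_b\rangle) = |e_b\rangle \otimes |e_a\rangle$. From this action one reads off the operator representation
\begin{equation}
\widehat{SWAP} = \sum_{a,b} |e_b\rangle\langle e_a| \otimes |e_a\rangle\langle e_b|,
\end{equation}
which can be verified by applying both sides to an arbitrary product basis vector.

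Next I would left-multiply by $\hat{V}_1 \otimes \hat{V}_2$, obtaining
\begin{equation}
(\hat{V}_1 \otimes \hat{V}_2)\widehat{SWAP} = \sum_{a,b} \bigl(\hat{V}_1|e_b\rangle\langle e_a|\bigr) \otimes \bigl(\hat{V}_2|e_a\rangle\langle e_b|\bigr),
\end{equation}
and then take the full trace. Since each summand is a product operator, the trace over $H_1 \otimes H_2$ factors as $\operatorname{Tr}(A \otimes B) = \operatorname{Tr}(A)\operatorname{Tr}(B)$, yielding
\begin{equation}
\operatorname{Tr}\bigl((\hat{V}_1 \otimes \hat{V}_2)\widehat{SWAP}\bigr) = \sum_{a,b} \langle e_a|\hat{V}_1|e_b\rangle\,\langle e_b|\hat{V}_2|e_a\rangle,
\end{equation}
where I have used $\operatorname{Tr}(\hat{V}_1|e_b\rangle\langle e_a|) = \langle e_a|\hat{V}_1|e_b\rangle$ and likewise for the second factor.

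The final step is to collapse the double sum using the completeness relation $\sum_b |e_b\rangle\langle e_b| = \hat{I}$ on $H_1 = H_2$. Summing over $b$ first gives $\sum_a \langle e_a|\hat{V}_1\hat{V}_2|e_a\rangle = \operatorname{Tr}(\hat{V}_1\hat{V}_2)$, which is exactly Eq.~(\ref{swap_id}). I do not anticipate a genuine obstacle here, as the computation is elementary; the only point demanding care is the bookkeeping of which subsystem each basis label belongs to after the swap, so that the partial traces are taken against the correct factors. Keeping the basis indices explicit throughout — rather than suppressing the tensor structure — is what makes the factorization and the subsequent completeness step transparent, and it is where a careless version of the argument could silently mismatch the indices.
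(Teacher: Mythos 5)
Your proof is correct. It reaches the same double sum as the paper's calculation, but it is organized differently: you expand $\widehat{SWAP}$ itself as the operator sum $\sum_{a,b}|e_b\rangle\langle e_a|\otimes|e_a\rangle\langle e_b|$ and then transform the left-hand side directly into $\operatorname{Tr}(\hat{V}_1\hat{V}_2)$ using trace factorization on product operators and the completeness relation, whereas the paper instead expands $\hat{V}_1$ and $\hat{V}_2$ in matrix elements, $\hat{V}_i=\sum_{j,k}V_i^{(j,k)}|j\rangle\langle k|$, evaluates $\operatorname{Tr}\bigl((\hat{V}_1\otimes\hat{V}_2)\widehat{SWAP}\bigr)$ by letting $\widehat{SWAP}$ act on the basis bras, and separately computes $\operatorname{Tr}(\hat{V}_1\hat{V}_2)$ in components, observing that both sides equal $\sum_{j,k}V_1^{(j,k)}V_2^{(k,j)}$. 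The two arguments are equivalent index computations; what yours buys is a one-directional derivation that keeps $\hat{V}_1,\hat{V}_2$ abstract (no matrix representation needed) and isolates the only structural input, namely the operator-sum form of $\widehat{SWAP}$, while the paper's version makes the matrix-element bookkeeping fully explicit at the cost of computing both sides and comparing. Your closing caution about tracking which tensor factor each index lands in after the swap is exactly the point where the paper's bra-side manipulation $\langle k|\otimes\langle m|\,\widehat{SWAP}=\langle m|\otimes\langle k|$ does the corresponding work.
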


\begin{proof}
We will calculate both sides of Eq. (\ref{swap_id}) and show that they lead to the same expression. To do that, we start by defining the matrix representations of $\hat{V}_i$ on the computational basis:
\begin{equation}
    \hat{V}_i = 
\begin{bmatrix}
V^{(1,1)}_i & V^{(1,2)}_i & \cdots & V^{(1,d)} \\
V^{(2,1)}_i & V^{(2,2)}_i & \cdots & V^{(2,d)}_i \\
\vdots & \vdots & \ddots & \vdots \\
V^{(d,1)}_i & V^{(d,2)}_i & \cdots & V^{(d,d)}_i
\end{bmatrix}.
\end{equation}
Then, we can write
\begin{equation}
    \hat{V}_1\hat{V}_2 =
\begin{bmatrix}
\sum_{k = 1}^dV^{(1,k)}_1V^{(k,1)}_2 & \cdots & * \\
\vdots & \ddots & \vdots \\
* & * & \sum_{k = 1}^dV^{(d,k)}_1V^{(k,d)}_2
\end{bmatrix},
\end{equation}
with $*$ denoting matrix elements we do not need. Thus
\begin{equation}
    \operatorname{Tr}(\hat{V}_1\hat{V}_2) = \sum_{j, k = 1}^dV^{(j,k)}_1V^{(k,j)}_2.
\end{equation}

Now, we calculate the left side of Eq. (\ref{swap_id}):
\begin{align}
    & \operatorname{Tr}\Big(\Big(\hat{V}_1 \otimes \hat{V}_2\Big)\widehat{SWAP}\Big) \nonumber \\
    &= \operatorname{Tr}\Big(\Big(\sum_{j,k = 1}^d V^{(j,k)}_1 |j\rangle \langle k| \otimes \sum_{l,m = 1}^d V^{(l,m)}_2 |l\rangle \langle m|\Big)\widehat{SWAP}\Big) \notag \\
    &= \sum_{j,k = 1}^d \sum_{l,m = 1}^d V^{(j,k)}_1 V^{(l,m)}_2 \operatorname{Tr}\bigl(|j\rangle \otimes |l\rangle \langle k| \otimes \langle m|\widehat{SWAP}\bigr) \notag \\
    &= \sum_{j,k = 1}^d \sum_{l,m = 1}^d V^{(j,k)}_1 V^{(l,m)}_2 \operatorname{Tr}\bigl(|j\rangle \otimes |l\rangle \langle m| \otimes \langle k|\bigr) \notag \\
    &= \sum_{j,k = 1}^d \sum_{l,m = 1}^d V^{(j,k)}_1 V^{(l,m)}_2 \langle m|j\rangle \langle k|l\rangle \notag \\
    &= \sum_{j,k = 1}^d V^{(j,k)}_1 V^{(k,j)}_2.
\end{align}
Therefore
\begin{equation}
\operatorname{Tr}\bigl((\hat{V}_1 \otimes \hat{V}_2)\widehat{SWAP}\bigr) = \operatorname{Tr}(\hat{V}_1\hat{V}_2).    
\end{equation}
\end{proof}

\begin{figure}[H]
    \centering
    \includegraphics[width=1.0\linewidth]{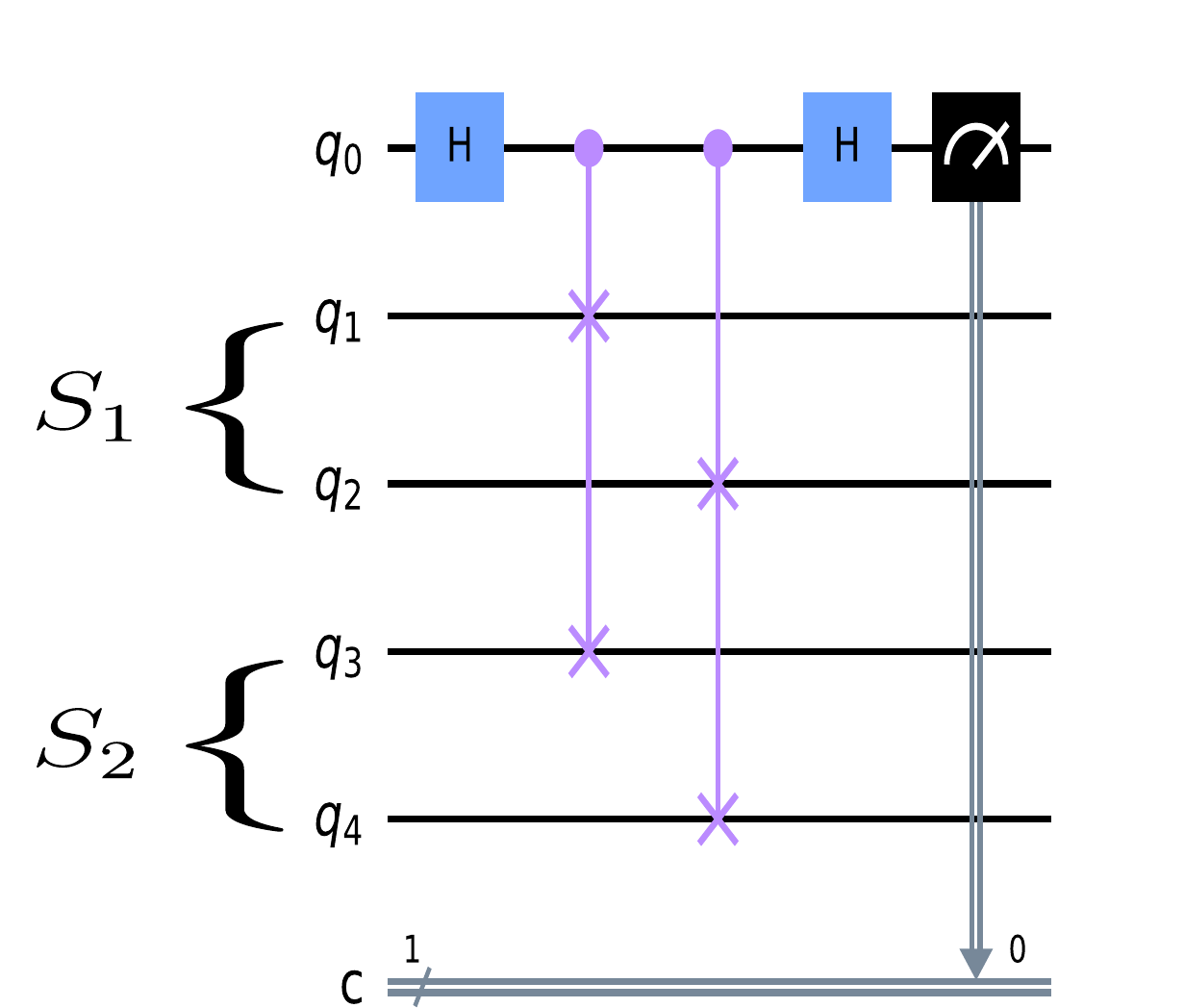}   \captionsetup{justification=raggedright,singlelinecheck=false}
\caption{A schematic representation of a modified SWAP test employed to determine $\operatorname{Tr}(\hat{\rho}_{S_1}\hat{\rho}_{S_2})$ in a four-qubit system ($q=4$). Initially, qubits $q_1$ and $q_2$ are prepared in the state $\hat{\rho}_{S_1}$, while qubits $q_3$ and $q_4$ are prepared in the state $\hat{\rho}_{S_2}$. An ancilla qubit $q_0$ is introduced. The circuit involves a Hadamard gate applied to $q_0$, qubit-qubit controlled-SWAP gates between the two qubit sets, and a final Hadamard gate on $q_0$. By measuring $q_0$ multiple times and obtaining $P_0$, the quantity $\operatorname{Tr}(\hat{\rho}_{S_1}\hat{\rho}_{S_2})$ can be estimated as $2P_0 - 1$.}
    \label{fig:swap_general_SM}
\end{figure}

We are now prepared to demonstrate the validity of Eq. (\ref{TrS1S2}). This proof will be conducted by constructing the proposed quantum circuit introduced in Ref. \cite{ekert}, illustrated in Fig. \ref{fig:swap_general_SM} for the specific scenario involving $q = 4$ qubits and one ancilla qubit. Our system comprises an ancilla qubit $q_0$ and two subsystems, denoted as $S_1$ and $S_2$. Initially, the system's density operator $\hat{\rho}$ is given by the tensor product:
\begin{equation}
\hat{\rho} = |0\rangle \langle 0| \otimes \hat{\rho}_{S_1} \otimes \hat{\rho}_{S_2}.
\end{equation}

Following the sequence of gates outlined in the quantum circuit, we define $\hat{\rho}^{(1)} = \hat{H} \hat{\rho} \hat{H}$, $\hat{\rho}^{(2)} = \bigl(\widehat{CSWAP}\bigr)\bigl(\hat{\rho}^{(1)}\bigr)\bigl(\widehat{CSWAP}\bigr)$ and $\hat{\rho}^{(3)} = \hat{H}\hat{\rho}^{(2)}\hat{H}$. Commencing with $\hat{\rho}^{(1)}$, we proceed by applying a Hadamard gate $\hat{H}$ to $q_0$:
\begin{align}
\hat{\rho}^{(1)} &= \hat{H} \hat{\rho} \hat{H} \notag \\
&= \frac{1}{2}(|0\rangle + |1\rangle)(\langle 0| + \langle 1|) \otimes \hat{\rho}_{S_1} \otimes \hat{\rho}_{S_2} \notag \\
&= |+\rangle \langle +| \otimes \hat{\rho}_{S_1} \otimes \hat{\rho}_{S_2},
\end{align}
with $|\pm\rangle := \frac{1}{\sqrt{2}}(|0\rangle \pm |1\rangle)$. By applying the controlled-SWAP gate $\widehat{CSWAP}$, we obtain
\begin{align}
& \hat{\rho}^{(2)} = \bigl(\widehat{CSWAP}\bigr)\bigl(\hat{\rho}^{(1)}\bigr)\bigl(\widehat{CSWAP}\bigr)\notag \\
=& \bigl(|0\rangle \langle 0| \otimes \hat{I} + |1\rangle \langle 1| \otimes \widehat{SWAP}\bigr)\bigl(|+\rangle \langle +| \otimes \hat{\rho}_{S_1} \otimes \hat{\rho}_{S_2}\bigr) \nonumber \\ 
& \times \bigl(|0\rangle \langle 0| \otimes \hat{I} + |1\rangle \langle 1| \otimes \widehat{SWAP} \bigr)\notag \\
=& \frac{1}{\sqrt{2}}\bigl(|0 \rangle \langle +| \otimes \hat{\rho}_{S_1} \otimes \hat{\rho}_{S_2} + |1 \rangle \langle +| \otimes \widehat{SWAP} (\hat{\rho}_{S_1} \otimes \hat{\rho}_{S_2})\bigr) \nonumber \\ & \times \bigl(|0\rangle \langle 0| \otimes \hat{I} + |1\rangle \langle 1| \otimes \widehat{SWAP} \bigr) \notag \\
=& \frac{1}{2}\bigl(|0 \rangle \langle 0| \otimes \hat{\rho}_{S_1} \otimes \hat{\rho}_{S_2} + |1 \rangle \langle 0| \otimes \widehat{SWAP} (\hat{\rho}_{S_1} \otimes \hat{\rho}_{S_2})  \nonumber \\ 
& + |0 \rangle \langle 1| \otimes (\hat{\rho}_{S_1} \otimes \hat{\rho}_{S_2}) \widehat{SWAP} \notag \\
& + |1 \rangle \langle 1| \otimes \widehat{SWAP} (\hat{\rho}_{S_1} \otimes \hat{\rho}_{S_2}) \widehat{SWAP}\bigr).
\end{align}
To finalize, we apply another Hadamard gate to $q_0$:
\begin{align}
& \hat{\rho}^{(3)} = \hat{H}\hat{\rho}^{(2)}\hat{H} \notag \\
=& \frac{1}{2}\bigl(|+ \rangle \langle +| \otimes \hat{\rho}_{S_1} \otimes \hat{\rho}_{S_2} + |- \rangle \langle +| \otimes \widehat{SWAP} (\hat{\rho}_{S_1} \otimes \hat{\rho}_{S_2})  \nonumber \\ 
& + |+ \rangle \langle -| \otimes (\hat{\rho}_{S_1} \otimes \hat{\rho}_{S_2}) \widehat{SWAP} \notag \\
& + |- \rangle \langle -| \otimes \widehat{SWAP} (\hat{\rho}_{S_1} \otimes \hat{\rho}_{S_2}) \widehat{SWAP}\bigr).
\end{align}

The next step is to calculate the reduced density operator $\hat{\rho}_0$ for the ancilla qubit. To do that, we take the partial trace over $S_1$ and $S_2$:
\begin{align}
    & \hat{\rho}_0 = \operatorname{Tr}_{S_1 S_2}(\hat{\rho}^{(3)}) \notag \\
    =& \frac{1}{2}\big(|+\rangle \langle +| \operatorname{Tr}(\hat{\rho}_{S_1} \otimes \hat{\rho}_{S_2}) + |- \rangle \langle +|\operatorname{Tr}(\widehat{SWAP} (\hat{\rho}_{S_1} \otimes \hat{\rho}_{S_2}))  \nonumber \\ 
    & + |+ \rangle \langle -| \operatorname{Tr} ((\hat{\rho}_{S_1} \otimes \hat{\rho}_{S_2}) \widehat{SWAP}\big) \notag \\
    &+ |- \rangle \langle -| \operatorname{Tr}(\widehat{SWAP} (\hat{\rho}_{S_1} \otimes \hat{\rho}_{S_2}) \widehat{SWAP})\bigr) \notag \\
    =& \frac{1}{2}\bigl((|+ \rangle \langle +| + | - \rangle \langle - |) + (|- \rangle \langle +| + |+ \rangle \langle -|)  \nonumber \\ 
    & \times \operatorname{Tr}((\hat{\rho}_{S_1} \otimes \hat{\rho}_{S_2})\widehat{SWAP})\bigr) \notag \\
    =& \frac{\hat{I} + \operatorname{Tr}(\hat{\rho}_{S_1}\hat{\rho}_{S_2})\hat{Z}}{2},
\end{align}
where in the last step we used the identity (\ref{swap_id}). Thus, the probability $P_0 = \operatorname{Tr}(|0\rangle \langle 0|\hat{\rho}_0)$ of obtaining state $|0\rangle$ for the ancilla qubit is
\begin{align}
P_0 &= \operatorname{Tr}\bigl(|0\rangle \langle 0|(\frac{\hat{I} + \operatorname{Tr}(\hat{\rho}_{S_1}\hat{\rho}_{S_2})\hat{Z}}{2})\bigr) \notag \\
&= \frac{1}{2}\operatorname{Tr}\bigl(|0\rangle \langle 0| + \operatorname{Tr}(\hat{\rho}_{S_1}\hat{\rho}_{S_2})|0\rangle \langle 0|\bigr) \notag \\
&= \frac{1 + \operatorname{Tr}(\hat{\rho}_{S_1}\hat{\rho}_{S_2})}{2}.
\end{align}

\begin{figure}[t]
    \centering
    \includegraphics[width=0.8\linewidth]{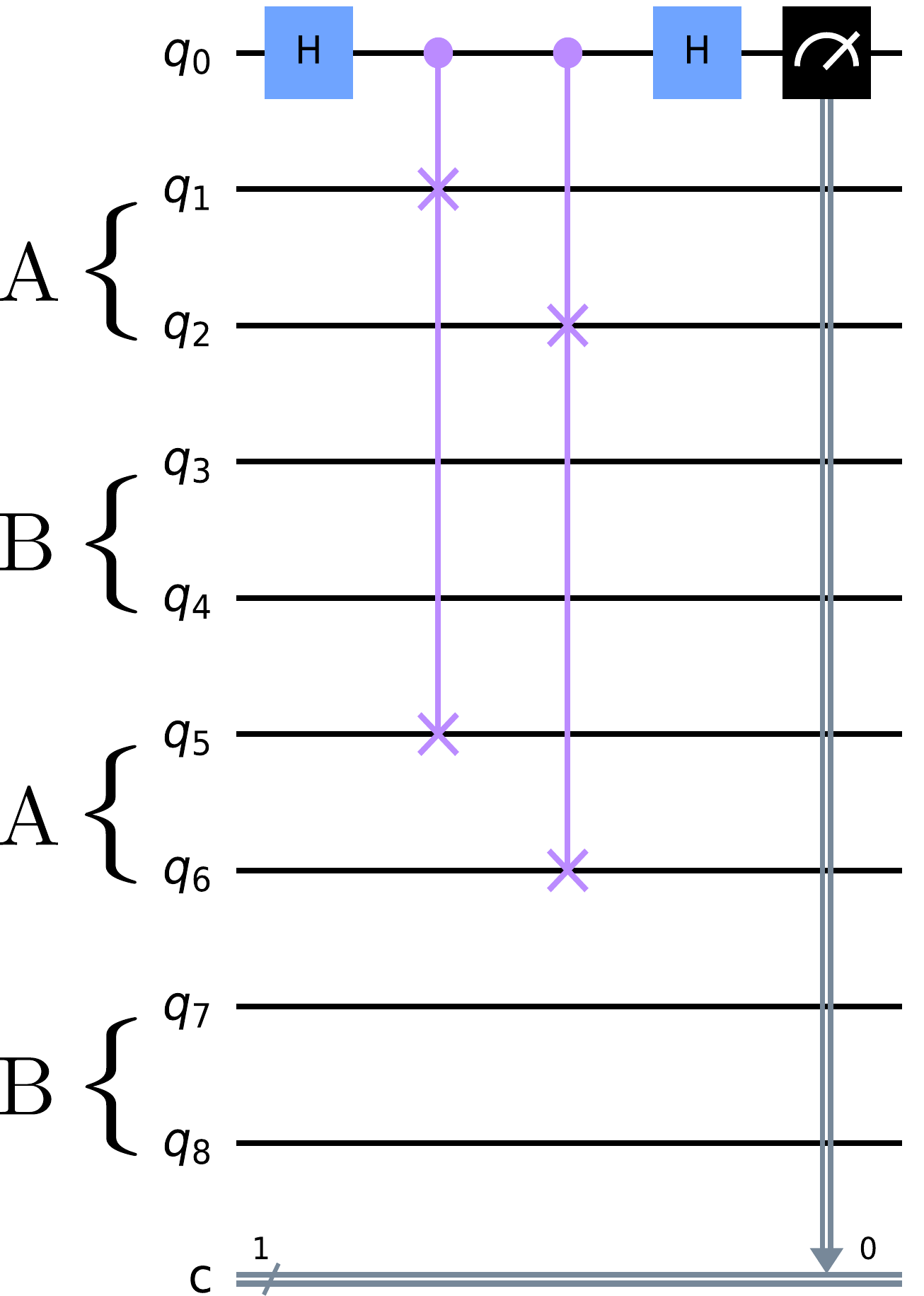}    \captionsetup{justification=raggedright,singlelinecheck=false}
    \caption{Quantum circuit used for the estimation of the reduced purity $\gamma_A$ of a system $AB$ composed by $q = 4$ qubits. This circuit is analogous to the general case, with some exceptions. Here, the first $q$ qubits after the ancilla qubit $q_0$ and the last $q$ qubits are prepared in the same state $\hat{\rho}$. Also, the controlled-SWAP gates are applied only between the first half of qubits of each copy. Then, measuring $q_0$ for various identical circuits allows us to estimate the reduced purity $\gamma_A = \operatorname{Tr}(\hat{\rho}_{A}^2)$, which corresponds to the bipartite reduced state of qubits $q_1$ and $q_2$ or $q_5$ and $q_6$.}
    \label{fig:purity_d4_SM}
\end{figure}

Therefore, we find that $\operatorname{Tr}(\hat{\rho}_{S_1}\hat{\rho}_{S_2}) = 2P_0 - 1.$ Now, specializing to the case where $\hat{\rho}_{S_1} = \hat{\rho}_{S_2}$, we obtain the purity:
\begin{equation}
    \gamma_{S_1} = \operatorname{Tr}(\hat{\rho}^2_{S_1}) = 2P_0 - 1.
\end{equation}
In our quantum algorithm, $\gamma_{S_1} = \gamma_A$ is a function of time and is actually the reduced purity of subsystem $A$. The quantum circuit for this special case is shown in Fig. \ref{fig:purity_d4_SM} for two identical systems $A$ and $B$ with $q = 4$ qubits each.


\end{document}